\renewcommand\footnotetextcopyrightpermission[1]{}
	\newcommand{\presec}{\vspace{-0.0cm}}
	\newcommand{\postsec}{\vspace{-0.0cm}}
	\newcommand{\presub}{\vspace{-0.0cm}}
	\newcommand{\postsub}{\vspace{-0.0cm}}
	\newcommand{\m}{\mathcal}
    \newcommand{\bb}{\textbf}
	\newtheorem*{Rmk}{Remark}
	\newtheorem{Cor}{Corollary}[section]
	\mathchardef\Gamma="0100 \mathchardef\Delta="0101
\mathchardef\Theta="0102 \mathchardef\Lambda="0103
\mathchardef\Xi="0104 \mathchardef\Pi="0105
\mathchardef\Sigma="0106 \mathchardef\Upsilon="0107
\mathchardef\Phi="0108 \mathchardef\Psi="0109
\mathchardef\Omega="010A
\newcommand{\outline}[1]{}%{\textbf{#1}}
\newtheorem*{defn*}{Definition}
\newcommand{\Comment}[1]{}
	\definecolor{zyd}{RGB}{50,50,200}
	\definecolor{yt}{RGB}{0,166,0}
	\definecolor{zz}{RGB}{0,200,200}
	\definecolor{hwc}{RGB}{200,0,200}
	\definecolor{zyk}{RGB}{200,50,50}
	\definecolor{remove}{RGB}{200,200,200}
	\newcommand{\remove}[1]{}
	\newcommand{\algoname}{ChainedFilter}
\begin{document}
\settopmatter{printfolios=true}
\title{\algoname{}: Combining Membership Filters by Chain Rule}
\author{Haoyu Li}
\affiliation{\institution{UT Austin\\ Peking University}\country{}}
\settopmatter{authorsperrow=4}
\author{Liuhui Wang}
\affiliation{\institution{University of Pennsylvania}\country{}}

\author{Qizhi Chen}
\affiliation{\institution{Peking University}\country{}}

\author{Jianan Ji}
\affiliation{\institution{Peking University}\country{}}

\author{Yuhan Wu}
\affiliation{\institution{Peking University}\country{}}

\author{Yikai Zhao}
\affiliation{\institution{Peking University}\country{}}

\author{Tong Yang}
\authornote{Co-corresponding authors.}
\affiliation{\institution{Peking University}\country{}}

\author{Aditya Akella}
\authornotemark[1]
\affiliation{\institution{UT Austin}\country{}}
%\balance
        \presec
\begin{abstract}
\postsec
Membership (membership query/membership testing) is a fundamental problem across databases, networks and security. However, previous research has primarily focused on either approximate solutions, such as Bloom Filters, or exact methods, like perfect hashing and dictionaries, without attempting to develop a an integral theory. In this paper, we propose a unified and complete theory, namely chain rule, for general membership problems, which encompasses both approximate and exact membership as extreme cases. Building upon the chain rule, we introduce a straightforward yet versatile algorithm framework, namely \algoname{}, to combine different elementary filters without losing information. Our evaluation results demonstrate that \algoname{} performs well in many applications: (1) it requires only 26\% additional space over the theoretical lower bound for implicit static dictionary, (2) it requires only 0.22 additional bit per item over the theoretical lower bound for lossless data compression, (3) it reduces up to 31\% external memory access than raw Cuckoo Hashing, (4) it reduces up to 36\% P99 tail point query latency than Bloom Filter under the same space cost in RocksDB database, and (5) it reduces up to 99.1\% filter space than original Learned Bloom Filter.
\end{abstract}
        
         \maketitle

	\presec
\section{Introduction}
\label{sec:intro}

%\subsection{Background and Motivation}\label{sec::intro1}

% 一句话解释membership，单边错误
Membership has been a fundamental problem for over fifty years, playing a significant role in databases \cite{tarkoma2011theory}, networks \cite{broder2004network} and security \cite{geravand2013bloom}. %Bloom Filter \cite{bloom1970space} and its variants \cite{wang2019vacuum,langperf2019,wu2021elastic} are the most promising membership algorithms due to their simplicity, compact space and constant query time. 
For instance, LSM-Tree based storage engines employ Bloom Filter \cite{chang2008bigtable,dayan2018optimal,matsunobu2020myrocks} to accelerate K-V stores; Routers and switches leverage Bloom Filter to classify, forward, and drop network packets \cite{dharmapurikar2006fast,li2011scalable,reviriego2020cuckoo}; Bitcoin miners use Invertable Bloom Lookup Tables (IBLT) to reduce the amount of information for block propagation and reconciliation \cite{goodrich2011invertible,ozisik2019graphene,imtiaz2019churn}.

Given a universe $\m{U}$ and a subset $\m{S}$ of $n$ items, \textbf{membership} aims to determine whether an item $x \in \m{U}$ is in $\m{S}$. Specially, a membership algorithm must say ``yes'' if $x$ is in $\mathcal{S}$ and may produce a small \bb{false positive rate} $\epsilon$ if $x$ is not in $\mathcal{S}$. In this paper, {we further define negative-positive ratio $\lambda:=|\m{U}\backslash \m{S}|/|\m{S}|$\footnote{The number of items not in $\m{S}$ divide the number of items in $\m{S}$.} and divide membership problems into three categories (\bb{Figure \ref{pic::taxonomy}})}: \bb{approximate membership}, where $\epsilon\neq0$ and $\lambda\to +\infty$; \bb{exact membership}, where $\epsilon=0$ and $\lambda<+\infty$; and \bb{general membership}, where $\epsilon\neq 0$ and $\lambda<+\infty$. According to our taxonomy, both approximate and the exact memberships are the extreme cases of general membership. But in history, the membership problems were not classified in this manner. In 1978, \cite{carter1978exact} proposed separate space lower bounds for approximate and exact memberships. Since the two expressions have significantly different forms, in the following decades, people regarded the approximate and the exact memberships as two separate research directions, but never tried to unify them to develop an integral theory for general membership problems, until this work.

\begin{figure}[h!tbp]
  \centering
   \includegraphics[width=0.36\textwidth]{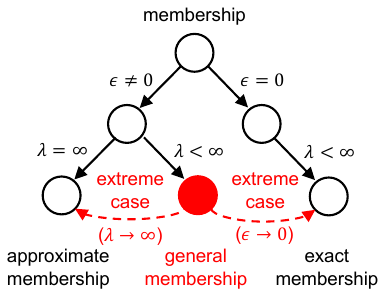}
\caption{Our taxonomy.}
\label{pic::taxonomy}
\end{figure}

Build on \cite{carter1978exact}, \textbf{we develop a unified space lower bound $nf(\epsilon,\lambda)$ for general membership problems, which encompasses the prior theoretical results as extreme cases}. Given a false positive rate $\epsilon$ and a negative-positive ratio $\lambda$, we have the following expression (where $H(\cdot)$ denotes Shannon's entropy, ignore $o(1)$ terms):
\begin{equation*}
    \begin{cases}
    f\left(\epsilon, \lambda\right) = \underbrace{f\left(\epsilon',\lambda\right)}_{\textbf{First stage}}+\underbrace{f\left(\epsilon/{\epsilon'}, \epsilon'\lambda\right)}_{\textbf{Second stage}},\forall \epsilon'\in[\epsilon,1] \text{ (Chain rule)};\\
        f\left(0, \lambda\right)=\left(\lambda+1\right)H\left(\frac{1}{\lambda+1}\right)\text{ (Exact membership bound \text{\cite{carter1978exact}})}.
    \end{cases}
\end{equation*}

Certainly, we can derive the arithmetic representation of $f$ by setting $\epsilon$ in \bb{chain rule} to zero\footnote{$f(\epsilon,\lambda) = n(\lambda+1)H\left(\frac{1}{\lambda+1}\right)-n(\epsilon\lambda+1)H\left(\frac{1}{\epsilon\lambda+1}\right).$}. However, we choose to retain the recursive formula because it offers valuable insights: \textbf{According to the chain rule, any membership problem  can be losslessly factorized into two (or multiple) stages.} The first stage involves a coarse membership algorithm that may yield some false positive items, while the second stage employs an accurate membership algorithm to efficiently handle the remaining small percentage of false positive items, especially when the negative-positive ratio is very large. The chain rule reveals a non-trivial observation that the factorization process incurs zero information loss. This observation serves as inspiration for algorithm design and provides a solid theoretical foundation for future research in the field of  membership problems.

Meanwhile, we acknowledge that the chain rule has certain limitations. First, it does not fully support dynamic memberships since it necessitates the identification of all false positive items before constructing the second stage. Moreover, we discover that the lossless property no longer holds for dynamic memberships (\bb{Section \ref{algo::replace elementary filters}}). Second, our chain rule theory is based on the assumption that all positive items are randomly selected from the universe, which may not perfectly accommodate real-world scenarios where item keys follow specific data distributions (\bb{Section \ref{Learned Filter}}).

\noindent$\bullet$ \bb{Paper Organization and Key Contributions.}

\textbf{(A, \S\ref{algo::membership query})} We propose a space lower bound for general membership problems and derive an elegant factorization theorem called chain rule. This theorem allows us to divide any membership problem into sub-problems without losing information.

\bb{(B, \textbf{\S\ref{algorithm}})} Building upon the chain rule, we introduce the versatile \algoname{} framework, which enables the combination of multi-stage membership algorithms, such as Bloomier Filters, to construct more efficient algorithms. The framework is compatible with assorted elementary filters, supports different combining operators (e.g. \texttt{AND} and \texttt{NAND}), and can handle certain dynamic scenarios.

\bb{(C, \S\ref{perf})} We evaluate the performance of \algoname{} in data compressing, classifying and filtering applications. Experimental results show that although our utilization of chain rule in this paper is simple and rudimentary, the ChainedFilter can significantly outperform existing works: (1) it requires only 26\% additional space over the theoretical lower bound for implicit static dictionary, (2) it requires only 0.22 additional bit per item over the theoretical lower bound for lossless data compression, (3) it reduces up to 31\% external memory access than raw Cuckoo Hashing \cite{fan2014cuckoo}, (4) it reduces up to 36\% \texttt{P99} tail point query latency than Bloom Filter under the same space cost in RocksDB database \cite{dong2021rocksdb}, and (5) it reduces up to 99.1\% filter space than original Learned Bloom Filter \cite{kraska2018case,mitzenmacher2018model,liu2020stable,dai2020adaptive}. We release our open source code on GitHub \cite{opensourcecode}.
\begin{equation*}
\begin{split}
\underbrace{\textbf{Chain rule}}_{\textbf{(A, \S\ref{algo::membership query}) Theory}}\xrightarrow[]{\textbf{design}}
\underbrace{\textbf{\algoname{}}}_{\textbf{(B, \textbf{\S\ref{algorithm}}) Algorithm}} \xrightarrow[]{\textbf{deploy}} \underbrace{\textbf{Applications}}_{\textbf{(C, \S\ref{perf}) Implementation}}
\end{split}
\end{equation*}

	\presub
\section{Chain Rule Theory}\label{algo::membership query}
\postsub

%In this section, we give the space lower bound and thereby derive the chain rule theory for general membership problems.  %We know some proofs in this paper are somewhat complex, so we include \textbf{Remarks} after \textbf{Theorems} to enable readers to skip over the detailed derivations.
%Third, we demonstrate the key design of our \algoname{} which theoretically outperforms the prior two; Finally, we extend our design to a framework for assorted membership query settings.
%\presub
%\subsection{Membership Query}
%\postsub
%
\renewcommand\arraystretch{1}
\begin{table}[h!tbp]
	\centering
	\begin{tabular}{c|l}
		\bottomrule
            \textbf{Symbol} & \textbf{Description} \\
            % \hline
            % \textbf{$\lambda$} & 9.68 &6.35&4.69&3.69&3.03&2.56&2.21\\
            % \textbf{Experimental $\hat{\lambda}$} & 9.72 &6.32&4.70&3.71&3.03&2.56&2.21\\
            \hline
            $\m{U}$ & Universe (positive and negative items)\\
            $\m{S}$ & The set of positive items\\
            $\mathscr{S}$ & The set of all possible sets of positive items ($\m{S}$)\\
            $n$ & The number of positive items ($|\m{S}|$)\\
            $\epsilon$ & False positive rate\\
            $\lambda$ & Negative-positive ratio ($|\m{U}\backslash \m{S}|/|\m{S}|$)\\
            $x$ & An item\\
            $C$ & Any constant greater than 1\\
            $M$ & The number of buckets in one hash table\\
            $r$ & $|\m{U}|/(2M)$\\
            $N$ & The number of SSTables in one level of LSM-Tree\\
            $\sigma$ & A mapping from $\mathscr{S}$ to $\mathscr{F}$\\ 
        $\mathcal{X}_{\mathcal{F}}$& $\bigcup_{\mathcal{S}\in \sigma^{-1}(\mathcal{F})} \mathcal{S}$\\
            $\m{F}(\cdot)$ & A specific filter\\
            $\mathscr{F}$ & The set of all possible filters ($\m{F}$)\\
            $nf(\cdot,\cdot)$ & Space lower bound for membership problems\\
            $nf^{\m{F}}(\cdot,\cdot)$& Space cost of the membership filter $\m{F}(\cdot)$\\
            % $n\m{F}(\cdot,\cdot)$ & Space cost of the filter $\m{F}(\cdot)$ (bits)\\
            $H(\cdot)$ & Shannon's entropy\\
            $h_\alpha(\cdot)$& $\alpha$-bit hash value of an item\\
            $f_\alpha(\cdot)$& $\alpha$-bit fingerprint of an item\\
            $T[\cdot]$ & Hash table\\
            BF/CF/EF & Bloomier Filter, \algoname{}, elementary filter\\
            \toprule
	\end{tabular}
\caption{Commonly used symbols.}
\label{commonly used symbols}
\end{table}
\subsection{Definition and Notations}
Given a universe $\mathcal{U}$ and arbitrary subset $\mathcal{S}\subset \mathcal{U}$, where $|\mathcal{U}|$ and $|\mathcal{S}|$ are known, the membership problem is to determine whether a queried item $x\in \mathcal{U}$ is in $\mathcal{S}$. 

In this paper, we call a membership algorithm as a \bb{filter}. This definition is broader than what people commonly use\footnote{E.g., in prior arts, people often call an exact membership algorithm as a ``dictionary''.}. A filter is an indicator function $\mathcal{F}(\cdot): \mathcal{U}\mapsto \{0,1\}$  with one-sided error, which means $\forall x \in \mathcal{S}$ (positive item), $\mathcal{F}(x) = 1$ has zero false negative; for $x\in \mathcal{U}\backslash \mathcal{S}$ (negative item), we allow a small \bb{false positive rate} $\epsilon\in[0,1]$ s.t. $\mathcal{F}(x) = 1$. Moreover, we define $\lambda:=|\m{U}\backslash \m{S}|/|\m{S}|$ as the \bb{negative-positive ratio}, and we denote the membership problem as $(\epsilon,\lambda)$. This definition encompasses both approximate ($\epsilon\neq 0$ and $\lambda \to +\infty$) and exact ($\epsilon=0$ and $\lambda< +\infty$) membership problems as extreme cases. Unless specified otherwise, we assume that a filter is designed to handle static membership problems. If a membership filter also supports dynamic insertions of new items, we explicitly refer to it as a "dynamic filter."
%
%and the filter cost as $n\mathcal{F}(\epsilon,\lambda) + o(n)$ bits\footnote{For simplicity, we use the same symbol $\mathcal{F}$ to denote both the filter and its space cost.}.
For quick reference, We list all commonly used symbols in \bb{Table \ref{commonly used symbols}}\footnote{Elementary filter (the last row of \bb{Table \ref{commonly used symbols}}): When we combine several sub-filters to form a larger one, each sub-filter is called an elementary filter.}. We will define some of these symbols in later sections.%Specifically, when $\epsilon\neq 0$ and $\lambda\to +\infty$, $\mathcal{F}_{ \langle \epsilon, \lambda\rangle}$ is called an \textbf{approximate} filter, which doesn't utilize any information of negative items; when $\epsilon=0$ and $\lambda < +\infty$, $\mathcal{F}_{\langle0,\lambda\rangle}$ is called an \textbf{exact} filter, which distinguishes positive and negative items with no error; all other membership .
% Given $n,\epsilon,\lambda$, if a filter $\mathcal{F}\triangleright \langle \epsilon,\lambda \rangle$ costs $nL_{n,\epsilon,\lambda}$ bits, we define $\mathcal{L}_{\epsilon,\lambda}^{\mathcal{F}}:=\limsup\limits_{n\to+\infty} L_{n,\epsilon,\lambda}$ as the space cost of $\mathcal{F}$.

%
\subsection{Space Lower Bound}\label{Space Lower Bound}
%\begin{wrapfigure}{r}{3.9cm}\vspace{-0.2in}
%\includegraphics[width=0.22\textwidth]{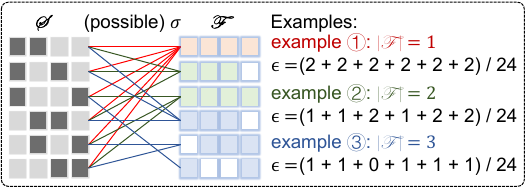}\caption{An example. $\mathcal{U}=\{1,2,3,4\},|\mathcal{S}|=2,|\mathscr{S}|=6$.}\label{pic::proof}\vspace{-0.2cm}
%\end{wrapfigure} %A natural question is, given $\lambda$ and false positive rate $\epsilon$, what is the tight space lower bound of filters? Since the theoretical result is an essential guidance of this paper, we present it as the first theorem.
In this part, we present a unified and complete space lower bound for general membership problems.

\begin{theorem}
\label{algo::lower bound} (\textbf{{Space Lower Bound}}) Ignore $o(n)$ terms. Let $nf{(\epsilon,\lambda)}$ $:=n\inf\limits_\mathcal{F}\mathcal{F}({\epsilon,\lambda})$ be the space lower bound for general membership problem $(\epsilon,\lambda)$, we have 
$$f(\epsilon,\lambda) = (\lambda+1)H\left(\frac{1}{\lambda+1}\right)-(\epsilon\lambda+1)H\left(\frac{1}{\epsilon\lambda+1}\right),$$
where $H(p):= -p\log p-(1-p)\log (1-p)\text{ is Shannon's entropy}.$
% $$\mathcal{L}_{\epsilon,\lambda}^*= \mathcal{L}_{0, \lambda}^*-\mathcal{L}_{0, \epsilon\lambda}^*,$$
% \begin{equation*}
%      \text{where }\begin{cases}
%     \mathcal{L}_{0,x}^*=(x+1)H\left(\frac{1}{x+1}\right),\\
%     H(p):= -p\log p-(1-p)\log (1-p)\text{ is the entropy}.\\
%     \end{cases}
% \end{equation*}
\end{theorem}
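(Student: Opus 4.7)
The plan is to reduce the lower bound to a pure counting argument. First I would identify every filter $\mathcal{F}$ with its accept set $\mathcal{X}_{\mathcal{F}} = \{x\in\mathcal{U} : \mathcal{F}(x)=1\}$, since by definition $\mathcal{F}$ is a one-sided indicator function. The no-false-negative condition forces $\mathcal{S}\subseteq\mathcal{X}_{\mathcal{F}}$, and the false-positive constraint limits the accepted negatives to at most $\epsilon\lambda n$ out of $\lambda n$, so $|\mathcal{X}_{\mathcal{F}}|\le n+\epsilon\lambda n=(1+\epsilon\lambda)n$. Any valid encoding $\sigma:\mathscr{S}\to\mathscr{F}$ must supply, for each of the $\binom{(1+\lambda)n}{n}$ possible positive sets $\mathcal{S}$, some filter whose accept set contains $\mathcal{S}$, and the storage cost is bounded below by $\log_2|\sigma(\mathscr{S})|$.

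Next I would invoke a pigeonhole bound on $\sigma$. Because $\sigma^{-1}(\mathcal{F})$ is contained in the collection of size-$n$ subsets of $\mathcal{X}_{\mathcal{F}}$, a single filter serves at most $\binom{|\mathcal{X}_{\mathcal{F}}|}{n}\le\binom{(1+\epsilon\lambda)n}{n}$ distinct positive sets, yielding
$$|\mathscr{F}|\;\ge\;\frac{\binom{(1+\lambda)n}{n}}{\binom{(1+\epsilon\lambda)n}{n}}.$$
Applying the standard Stirling asymptotic $\log_2\binom{N}{k}=NH(k/N)+O(\log N)$ to numerator and denominator, the numerator contributes $(1+\lambda)n\,H\!\left(1/(1+\lambda)\right)$ and the denominator subtracts $(1+\epsilon\lambda)n\,H\!\left(1/(1+\epsilon\lambda)\right)$. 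Dividing by $n$ gives exactly the claimed closed form for $f(\epsilon,\lambda)$, modulo the stated $o(n)$ slack.

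I expect the counting itself to be routine; the main obstacle is handling a few boundary details cleanly. First, $\epsilon\lambda n$ need not be an integer, so one has to round to $\lfloor(1+\epsilon\lambda)n\rfloor$ and verify that the discrepancy is absorbed by the $O(\log n)$ Stirling error, and similarly that the argument is unaffected when $\epsilon\lambda$ is very small so that $(1+\epsilon\lambda)n$ differs from $n$ by only $o(n)$. Second, as a sanity check I would verify the two limiting regimes: when $\epsilon\to 0$ the second term vanishes and the bound reduces to $(\lambda+1)H(1/(\lambda+1))$, matching the exact-membership lower bound of \cite{carter1978exact}; when $\lambda\to\infty$ with $\epsilon$ fixed, a short expansion of $H$ should yield $\log_2(1/\epsilon)$, the classical Bloom-filter lower bound. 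Matching both limits simultaneously is precisely what certifies that the single formula $f(\epsilon,\lambda)$ truly unifies the two previously separate results.
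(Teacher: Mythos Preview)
Your counting argument is correct and considerably simpler than the paper's, but it proves a slightly weaker statement than what the paper's own proof establishes, and the gap is worth naming.

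Your step ``the false-positive constraint limits the accepted negatives to at most $\epsilon\lambda n$'' assumes the \emph{per-instance} constraint: for every $\mathcal{S}$, the filter $\sigma(\mathcal{S})$ has at most $\epsilon\lambda n$ false positives. Under that reading your pigeonhole bound $|\mathscr{F}|\ge\binom{(1+\lambda)n}{n}/\binom{(1+\epsilon\lambda)n}{n}$ is exactly right, and Stirling finishes it; this is the natural extension of the Carter et al.\ argument to nonzero $\epsilon$.

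The paper, however, only assumes an \emph{average} false-positive rate of $\epsilon$, i.e.
\[
\epsilon \;=\; \frac{\sum_{\mathcal{F}}|\sigma^{-1}(\mathcal{F})|\,\dfrac{|\mathcal{X}_{\mathcal{F}}\setminus\mathcal{S}|}{|\mathcal{U}\setminus\mathcal{S}|}}{\sum_{\mathcal{F}}|\sigma^{-1}(\mathcal{F})|}.
\]
Under this weaker hypothesis individual filters may have accept sets much larger than $(1+\epsilon\lambda)n$, so your uniform bound on $|\mathcal{X}_{\mathcal{F}}|$ does not follow. That is why the paper introduces the convex function $h(p)=p\,g^{-1}(p)$, splits $\mathscr{F}$ into a ``bulk'' part and a negligible remainder, and applies Jensen's inequality to pass from the average constraint back to the same asymptotic bound. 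What this buys is a lower bound that holds even for schemes that only guarantee $\epsilon$ on average over a random $\mathcal{S}$; your argument buys simplicity but needs the per-instance guarantee. If the theorem is read in the per-instance sense (which is the usual one for Bloom-type filters), your proof is complete; if it is read in the averaged sense, you would still need a convexity step like the paper's.
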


\begin{proof} The key idea of this proof follows the technique of \cite{carter1978exact}, except we take both $\epsilon$ and $\lambda$ into account. Given a membership problem ($\epsilon,\lambda$), we analysis all possible mappings from input sets to all possible filters, and use information theory to derive a lower bound of the space cost.

Given the universe $\mathcal{U}$ and the negative-positive ratio $\lambda$, we start by defining the set $\mathscr{S}:=\{\mathcal{S}\subset \mathcal{U} : |\mathcal{S}|=n \}$ as the set of all possible sets of positive items, and $\mathscr{F}$ as the set of all filter instances. Given any $\mathcal{S}\in \mathscr{S}$, we draw a filter instance $\mathcal{F} \in \mathscr{F}$ which has zero false negative. We denote this drawing method as a mapping $f: \mathscr{S} \mapsto \mathscr{F}$. Note that more than one set in $\mathscr{S}$ may map to a same filter $\mathcal{F}$, so the inverse mapping $\sigma^{-1}(\mathcal{F})\subset \mathscr{S}$ may have more than one element. Actually, we have
$$\sum\limits_{\mathcal{F}\in \mathscr{F}}|\sigma^{-1}(\mathcal{F})|=|\mathscr{S}|=\binom{|\mathcal{U}|}{|\mathcal{U}\backslash \mathcal{S}|} \text{ and } |\mathscr{F}|\leqslant 2^{n f(\epsilon,\lambda)}.$$

Since the filter $\mathcal{F}$ has zero false negative, so for every $x\in \mathcal{X}_{\mathcal{F}}:=\bigcup\limits_{\mathcal{S}\in \sigma^{-1}(\mathcal{F})} \mathcal{S}$, we have $\mathcal{F}(x)=1$ (\textbf{Figure \ref{pic::proof}}).

\begin{figure}[h!tbp]
  \centering
  \includegraphics[width=0.48\textwidth]{}
\caption{Some examples when $\m{U}=\{1,2,3,4\}$ and $|\m{S}|=2$ (i.e.$\mathscr{S}=\{\{1,2\},\{1,3\},\{1,4\},\{2,3\},\{2,4\},\{3,4\}\})$. All connections are possible (but maybe not unique) mappings. The mapping should ensure that the filter has a zero false negative rate. For instance, in example 2, $\sigma$ maps the input positive set $\m{S}=\{1,2\}$ (the first row on the left) to the first green filter $\m{F}$ s.t. $\m{X}_\m{F}=\{1,2,3\}$. In other words, $\m{F}(1)=1,\m{F}(2)=1,\m{F}(3)=1$ (false positive), and $\m{F}(4)=0$.}
\label{pic::proof}
\end{figure}

The next two formulae are the critical observations in our proof: On the one hand, since the elements in $\sigma^{-1}(\mathcal{F})$ are different from each other, we have
$$\binom{|\mathcal{X}_{\mathcal{F}}|}{|\mathcal{S}|}\geqslant |\sigma^{-1}(\mathcal{F})|;$$

On the other hand, for a specific $\mathcal{S}\in \sigma^{-1}(\mathcal{F})$ and for all $x\in \mathcal{U}$, $\mathcal{F}(x)=1$ incurs false positive iff $x\in X_{\mathcal{F}}\backslash\mathcal{S}$. Therefore, the overall false positive rate is
$$\epsilon = \frac{\sum\limits_{\mathcal{F}\in \mathscr{F}} |\sigma^{-1}(\mathcal{F})| \frac{|\mathcal{X}_{\mathcal{F}}\backslash \mathcal{S}|}{|\mathcal{U}\backslash \mathcal{S}|}}{\sum\limits_{\mathcal{F}\in \mathscr{F}}|\sigma^{-1}(\mathcal{F})|}.$$

Now let $\theta>0$, our main idea of the following proof is to divide $\mathscr{F}$ into two parts $\mathscr{F}_1:=\{\mathcal{F}\in \mathscr{F}: |X_{\mathcal{F}}\backslash \mathcal{S}|\geqslant \theta n\}$ and $\mathscr{F}_2:=\mathscr{F}\backslash \mathscr{F}_1$, then bound their false positive rate respectively.

Consider the monotonically increasing function $g(p):=\exp\{n((p+1)\ln(p+1)-p\ln p)\} (p>0)$. To start with, we prove $\forall \varepsilon \in (0,1), \exists N_0>0, s.t. \forall \mathcal{F}\in\mathscr{F}_1,$ when $n>N_0,\varepsilon g^{-1}(|\sigma^{-1}(\mathcal{F})|)n$ is uniformly less than $|X_{\mathcal{F}}\backslash \mathcal{S}|.$ Initially, given $\varepsilon<\varepsilon'<1$, we can select $N_1=(\varepsilon'-\varepsilon)^{-1}$ s.t. when $n>N_1,$ we have $ \varepsilon g^{-1}(|\sigma^{-1}(\mathcal{F})|) n< \varepsilon' g^{-1}(|\sigma^{-1}(\mathcal{F})|)n$ $-1< \lfloor \varepsilon' g^{-1}(|\sigma^{-1}(\mathcal{F})|) n\rfloor.$ Next, according to Stirling's formula 
$$\lim\limits_{n\to+\infty}\frac{n!}{\sqrt{2\pi n}}\left(\frac{e}{n}\right)^n=1,$$
we can select $N_0>N_1$ which only relies on $\varepsilon$ and $\theta$, s.t. $\forall n>N_0$, $\tbinom{\lfloor\varepsilon' g^{-1}(|\sigma^{-1}(\mathcal{F})|) n\rfloor+|\mathcal{S}|}{|\mathcal{S}|}<2g(\varepsilon'g^{-1}(|\sigma^{-1}(\mathcal{F})|))<g(g^{-1}(|\sigma^{-1}(\mathcal{F})|))$ $=|\sigma^{-1}(\mathcal{F})|\leqslant \tbinom{|\mathcal{X}_{\mathcal{F}}|}{|\mathcal{S}|}\Rightarrow\varepsilon g^{-1}(|\sigma^{-1}(\mathcal{F})|) n< 
 \lfloor\varepsilon' g^{-1}(|\sigma^{-1}(\mathcal{F})|) n\rfloor< |X_{\mathcal{F}}\backslash \mathcal{S}| $ holds for all $\mathcal{F}\in \mathscr{F}_1$.

Similarly, when $n\to +\infty,$ we have $|\sigma^{-1}(\mathcal{F})|<g(2\theta)$ for all $\mathcal{F}\in \mathscr{F}_2$. Therefore 
$$ \sum\limits_{\mathcal{F}\in \mathscr{F}_1}|\sigma^{-1}(\mathcal{F})| \geqslant |\mathscr{S}|-  g(2\theta) |\mathscr{F}|.$$ 
To take a step further, consider $h(p):=pg^{-1}(p) := xp.$ Because 
$$\frac{\mathrm{d}h(p)}{\mathrm{d}p} = y + p/\left(\frac{\mathrm{d}g^{-1}(y)}{\mathrm{d}y}\right)=y+\frac{1}{n\ln (y+1) - n\ln y}$$ 
increases as $y$ increases (so as $p = g^{-1}(y)$ increases), we find $h$ convex. According to Jensen's inequality, we have
\begin{equation*}
\begin{split}
\epsilon &\geqslant \frac{\sum\limits_{\mathcal{F}\in \mathscr{F}_1} |\sigma^{-1}(\mathcal{F})| \frac{|\mathcal{X}_{\mathcal{F}}\backslash \mathcal{S}|}{|\mathcal{U}\backslash \mathcal{S}|}}{\sum\limits_{\mathcal{F}\in \mathscr{F}}|\sigma^{-1}(\mathcal{F})|}\geqslant \frac{\sum\limits_{\mathcal{F}\in \mathscr{F}_1} |\sigma^{-1}(\mathcal{F})| \varepsilon g^{-1}(|\sigma^{-1}(\mathcal{F})|)n}{|\mathscr{S}|\lambda n}\\
&\geqslant \frac{\varepsilon}{|\mathscr{S}| \lambda} \left(\sum\limits_{\mathcal{F}\in \mathscr{F}_1} |\sigma^{-1}(\mathcal{F})|\right) g^{-1}\left(\frac{\sum\limits_{\mathcal{F}\in \mathscr{F}_1} |\sigma^{-1}(\mathcal{F})|}{|\mathscr{F}_1|}\right) 
\end{split}
\end{equation*}
\begin{equation*}
\begin{split}
&\geqslant \frac{\varepsilon}{ \lambda}\left(1-\frac{g(2\theta)|\mathscr{F}|}{|\mathscr{S}|}\right)g^{-1}\left(\frac{|\mathscr{S}|}{|\mathscr{F}|}- g(2\theta)\right).
\end{split}
\end{equation*}
Let $\varepsilon\to 1, \theta\to 0$ and ignore $o(n)$, we have 
$$\epsilon \lambda \geqslant g^{-1}\left(\frac{|\mathscr{S}|}{|\mathscr{F}|}\right)\Rightarrow g(\epsilon \lambda)\geqslant \frac{|\mathscr{S}|}{|\mathscr{F}|} \geqslant \frac{g(\lambda)}{  2^{n f({\epsilon,\lambda})} }.$$
Therefore
$$f({\epsilon,\lambda}) \geqslant \frac{1}{n}\log \frac{g(\lambda)}{g(\epsilon \lambda)}=\log\left(\frac{(\lambda+1)^{\lambda+1}}{\lambda^\lambda}\right)-\log\left(\frac{(\epsilon \lambda+1)^{\epsilon \lambda+1}}{(\epsilon \lambda)^{\epsilon \lambda}}\right).$$
It's easy to verify that $f({\epsilon,\lambda})\leqslant\frac{1+o(1)}{n}(\log g(\lambda)- \log g(\epsilon \lambda))$ (consider the example $\binom{|\mathcal{X}_{\mathcal{F}}|}{|\mathcal{S}|} =(|\mathscr{S}|/|\mathscr{F}|)(1+o(1))$). So 
$$f({\epsilon,\lambda})=(\lambda+1)H\left(\frac{1}{\lambda+1}\right)-(\epsilon\lambda+1)H\left(\frac{1}{\epsilon\lambda+1}\right)$$
is the space lower bound.
\end{proof}
\begin{Rmk}\textbf{Theorem \ref{algo::lower bound}} connects the approximate and the exact membership query problems.
%and shows that the minimum space cost to exactly encode $\lambda n$ negative items equals to the sum of two parts: the space cost to approximately encode $ \lambda n$ negative items allowing false positive rate $\epsilon$, and the space cost to exactly encode $\epsilon \lambda n$ negative items (i.e. $\mathcal{L}_{0, \lambda}^*=\mathcal{L}_{\epsilon, \lambda}^*+\mathcal{L}_{0, \epsilon\lambda}^*$). 
%
When $\epsilon\neq 0$ and $\lambda\to+\infty$,
$f({\epsilon,+\infty})=\log1/\epsilon$
degenerates to the space lower bound of approximate memberships. When $\epsilon = 0$ and $\lambda<+\infty$,
$f({0,\lambda})=(\lambda+1)H\left(1/(\lambda+1)\right)$
degenerates to the space lower bound of exact memberships \cite{carter1978exact}.
% \begin{equation*}
%     \begin{cases}
%         ;\\
%         \mathcal{L}_{0, \lambda}^*=(\lambda+1)H\left(\frac{1}{\lambda+1}\right)
%     \end{cases}
% \end{equation*} are the lower bounds of approximate and exact membership filters given by . We will rethink this property in \textbf{Section \ref{algo::chain rule}}. 
\end{Rmk}
%In the next part, we introduce the Bloomier filter which achieves  $Cn\log\frac{1}{\epsilon}$ and $Cn( \lambda +1) (\forall C>1, \log\frac{1}{\epsilon}\in \mathbb{Z})$ bits and constant query time for approximate and exact membership query respectively.
\subsection{Chain Rule}
In this part, we delve deeper and derive our chain rule theory, which is arithmetically equivalent to the space lower bound but provides insights into the essence of membership problems. As a preliminary, please note that in our context, the term ``membership problem'' is different from the term ``membership filter''. When referring to an abstract ``problem'', we focus on the theoretical space lower bound. While when discussing a specific ``filter'', we focus on a practical algorithm that may not be space-optimal.

To begin, let's consider encoding $n$ positive items and $\lambda n$ negative items with a false positive rate of $\epsilon_1\epsilon_2$. Intuitively, we can factorize the problem into two stages: first, encoding all positive items and the $\lambda n$ negative items with a false positive rate of $\epsilon_1$, and then encoding the positive items and the remaining $ \lambda \epsilon_1 n$ false positive items with a false positive rate of $\epsilon_2$. It is evident that if we solve the two sub-problems, we can address the primary problem. Hence, the two sub-problems should not be easier than the primary problem. However, since the positive items are encoded twice in separate stages, it may seem that the two-stage factorization incurs additional space overhead, and we might need to make careful trade-offs to avoid accumulating inherent space costs caused by the factorization. 

Surprisingly, \bb{Theorem \ref{algo::lower bound}} reveals that all our concerns and worries are unnecessary. Because
\begin{equation*}
\begin{split}
    f(\epsilon_1\epsilon_2,\lambda)&=(\lambda+1)H\left(\frac{1}{\lambda+1}\right)-(\epsilon_1\epsilon_2\lambda+1)H\left(\frac{1}{\epsilon_1\epsilon_2\lambda+1}\right)\\
    &=\left((\lambda+1)H\left(\frac{1}{\lambda+1}\right)-(\epsilon_1\lambda+1)H\left(\frac{1}{\epsilon_1\lambda+1}\right)\right)\\
    &+\left((\epsilon_1\lambda+1)H\left(\frac{1}{\epsilon_1\lambda+1}\right)-(\epsilon_1\epsilon_2\lambda+1)H\left(\frac{1}{\epsilon_1\epsilon_2\lambda+1}\right)\right)\\
    &=f(\epsilon_1,\lambda)+f(\epsilon_2,\epsilon_1\lambda),
\end{split}
\end{equation*}

we find the factorization is completely lossless (ignoring $o(1)$ terms), which means we can arbitrarily decompose any membership problem into an arbitrary number of sub-problems without incurring any additional space cost. In other words,  if all the elementary filters used in the combination are space-optimal, the resulting combined membership filter will also be space-optimal\footnote{As an extreme example, when $\epsilon_2=1$, the equation $f(\epsilon_1,\lambda)=f(\epsilon_1,\lambda)+f(1,\epsilon_1\lambda)$ also holds because the second stage filter can always report true and thus does not contribute to the space cost.}. Instead, when we combine imperfect membership filters to solve a membership problem, the only source of space overhead stems from that the filters cannot optimally solve the sub-problems. The factorization process itself does not introduce any additional space overhead. In later chapters, we will use Bloomier Filters as a straightforward example to show how to appropriately factorize a membership problem into sub-problems and enhance overall performance. But before that, let us rewrite the expressions as $\epsilon=\epsilon_1\epsilon_2$ and $\epsilon'=\epsilon_1$, and present the conclusion in a recursive form: 
\begin{theorem}\label{thm::chain rule}
    (\textbf{Chain Rule Theory}) \begin{equation*}
    \begin{cases}
    f\left(\epsilon, \lambda\right) = {f\left(\epsilon',\lambda\right)}+{f\left(\epsilon/{\epsilon'}, \epsilon'\lambda\right)},\forall \epsilon'\in[\epsilon,1];\\
        f\left(0, \lambda\right)=\left(\lambda+1\right)H\left(\frac{1}{\lambda+1}\right).
    \end{cases}
\end{equation*}
\end{theorem}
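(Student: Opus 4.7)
The plan is to observe that, given \textbf{Theorem \ref{algo::lower bound}}, the chain rule is really an algebraic identity about the closed-form expression $f(\epsilon,\lambda)=(\lambda+1)H(1/(\lambda+1))-(\epsilon\lambda+1)H(1/(\epsilon\lambda+1))$, so the proof reduces to two short verifications rather than a fresh information-theoretic argument. I would split the statement into the boundary condition $f(0,\lambda)=(\lambda+1)H(1/(\lambda+1))$ and the recursive identity, and dispatch them in that order.

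First, for the boundary case $\epsilon=0$, I would simply plug $\epsilon=0$ into the closed form. Then $\epsilon\lambda+1=1$, and by the convention $0\log 0 = 0$ one has $H(1)=0$, so the second term vanishes and we are left with $(\lambda+1)H(1/(\lambda+1))$, exactly as claimed. This also matches the exact-membership bound of \cite{carter1978exact} as noted in the remark following \textbf{Theorem \ref{algo::lower bound}}.

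Second, for the recursive identity, I would introduce the shorthand $\phi(\mu):=(\mu+1)H(1/(\mu+1))$ so that $f(\epsilon,\lambda)=\phi(\lambda)-\phi(\epsilon\lambda)$. The key observation is that this representation is telescoping: for any $\epsilon'\in[\epsilon,1]$, writing $\epsilon=\epsilon'\cdot(\epsilon/\epsilon')$ and inserting an intermediate term $\phi(\epsilon'\lambda)$ gives
\begin{equation*}
\phi(\lambda)-\phi(\epsilon\lambda)=\bigl(\phi(\lambda)-\phi(\epsilon'\lambda)\bigr)+\bigl(\phi(\epsilon'\lambda)-\phi((\epsilon/\epsilon')\cdot\epsilon'\lambda)\bigr),
\end{equation*}
which is precisely $f(\epsilon',\lambda)+f(\epsilon/\epsilon',\epsilon'\lambda)$. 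I would check that the hypothesis $\epsilon'\in[\epsilon,1]$ is used only to guarantee $\epsilon/\epsilon'\in[0,1]$, so that the second summand is a valid instance of $f$ (with a legitimate false positive rate and a non-negative effective negative-positive ratio $\epsilon'\lambda$).

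There is no real obstacle here beyond bookkeeping: once the closed form from \textbf{Theorem \ref{algo::lower bound}} is in hand, the chain rule is a one-line telescoping identity, and the boundary case is an evaluation of $H$ at $1$. The part that deserves the most care is emphasizing that this algebraic identity carries semantic content --- namely that any $(\epsilon,\lambda)$ problem can be factorized into two sub-problems with \emph{no} asymptotic space loss --- which is what the surrounding discussion in the paper already highlights, and what motivates the \algoname{} framework in the subsequent sections.
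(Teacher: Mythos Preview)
Your proposal is correct and mirrors the paper's own argument essentially verbatim: the paper also plugs the closed form from \textbf{Theorem \ref{algo::lower bound}} into $f(\epsilon_1\epsilon_2,\lambda)$, adds and subtracts the intermediate term $(\epsilon_1\lambda+1)H(1/(\epsilon_1\lambda+1))$, and reads off $f(\epsilon_1,\lambda)+f(\epsilon_2,\epsilon_1\lambda)$ before substituting $\epsilon=\epsilon_1\epsilon_2$, $\epsilon'=\epsilon_1$. Your shorthand $\phi(\mu)$ just makes the telescoping more visible, but the content is identical.
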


It's easy to verify that the above equation set is equivalent to the space lower bound in \textbf{Theorem \ref{algo::lower bound}} (you can check it by setting $\epsilon=0$), but this form is more elegant and may provide more insights.
        % \vspace{-0.5cm}
\presub
\section{Bloomier Filter}\label{algo::warm up}
\postsub

In this section, we present two variants derived from the Bloomier Filter \cite{chazelle2004bloomier,charles2008bloomier}, which belong to approximate and exact membership algorithms, respectively\footnote{The approximate Bloomier Filter is also referred to as XOR filter \cite{graf2020xor} or binary fuse filter \cite{graf2022binary}. For consistency, we use the term ``Bloomier Filter'' interchangeably throughout this paper.}. In \bb{Section \ref{algorithm}}, we combine these elementary filters by chain rule to construct \algoname{}. The reason we introduce Bloomier Filters as elementary filters is that they are easy to describe and implement, but readers can also use other elementary filters to achieve special properties, like smaller filter space, smaller construction space or supporting dynamic exclusions.
We discuss more related works in \textbf{Section \ref{sec:related}}.

\textbf{Overview.} Bloomier Filter is a compact perfect hashing algorithm that supports both approximate and exact membership query by encoding item $e$'s fingerprint $f_\alpha (e)\in \{0,1\}^\alpha$ into the hash table. Specifically, to build an approximate membership algorithm, the Bloomier Filter encodes an $\alpha$-bit fingerprint $f_\alpha(e)=h_\alpha(e)$ for every positive item; to build an exact membership algorithm, the Bloomier Filter maps every item to a one-bit hash value $h_1(e)\in\{0,1\}$ and encodes $f_1(e)=h_1(e)$ (resp. $f_1(e)=\sim h_1(e)$) for every positive (resp. negative) item. The query result of an item $e$ depends on whether its hash value matches the fingerprint in the hash table. For interested readers, we present the detailed descriptions of Bloomier Filter in the next three paragraphs. \bb{Skipping them and directly reading the \textbf{Remark} does not affect the comprehension of this paper}.

\textbf{Algorithm.} Suppose we already have the entire universe of all $|\m{U}|$ possible items\footnote{Bloomier Filter is a static algorithm and does not support dynamic scenarios.} whose value are either zero or one. To construct the hash table, we first initialize the hash table by all zero, mark all items as ``not matched'' and let the variable $order=|\m{U}|$. the Bloomier Filter maps every not-matched item $e$ to $j$ different slots $s^e_{[1..j]}$. Then, it repeats the following operations (called the \textit{peeling} process) until all items are ``matched'': (1) It selects a slot $s$ that is mapped by only one item $e_0$\footnote{If the algorithm cannot find such a slot, it reports construction fail and terminates. But the theory later proves that the algorithm will succeed with high probability.}; (2) It marks the insertion place (ip) of $e_0$ as $s(e_0)=s$, and the insertion order (io) of $e_0$ as $order$; (3) It marks $e_0$ as ``matched'', peels $e_0$ and decreases $order$ by one. Finally, it inserts items in $order$: for items $e$ from $order = 1$ to $|\m{U}|$, it encodes $(\oplus_{i=1}^j s^{e}_{i})\oplus f_\alpha(e)$(where $\oplus$ means XOR) into the slot $s(e)$. To query an item $e$, the Bloomier Filter reports $\oplus_{i=1}^j s^{e}_{i}$ as the result.

\textbf{Example.} Suppose we have $|\m{U}|=3$ items $e_1,e_2$ and $e_3$. To construct the hash table, the Bloomier Filter first maps $e_1$ to $i=3$ slots $s_1,s_2,s_3$; maps $e_2$ to $s_1,s_3,s_4$; and maps $e_3$ to $s_2,s_4,s_5$. Then the filter (1) selects slot $s_5$ which is only mapped by $e_3$, marks $e_3$ with (ip = 5, io = 3) and peels $e_3$; (2) selects $s_2$ which is only mapped by $e_1$, marks $e_1$ with (ip = 2, io = 2) and peels $e_1$; (3) selects $s_1$ which is only mapped by $e_2$, marks $e_2$ with (ip = 1, io = 1) and peels $e_2$. Finally, the filter (1) inserts $s_1$ with $s_1\oplus s_3\oplus s_4 \oplus f_\alpha(e_2)=f_\alpha(e_2)$ (io = 1); (2) inserts $s_2$ with $s_1\oplus s_2\oplus s_3 \oplus f_\alpha(e_1)=f_\alpha(e_2)\oplus f_\alpha(e_1)$ (io = 2); (3) inserts $s_5$ with $s_2\oplus s_4\oplus s_5 \oplus f_\alpha(e_3)=f_\alpha(e_2)\oplus f_\alpha(e_1) \oplus f_\alpha(e_3)$  (io = 3). To query the items, the Bloomier Filter reports $Query(e_1)=s_1\oplus s_2\oplus s_3=(f_\alpha(e_2))\oplus(f_\alpha(e_2)\oplus f_\alpha(e_1))\oplus 0 =f_\alpha(e_1), Query(e_2)= s_1\oplus s_3\oplus s_4=(f_\alpha(e_2))\oplus 0 \oplus 0 = f_\alpha(e_2), \text{ and } Query (e_3)=s_2\oplus s_4 \oplus s_5 = (f_\alpha(e_2)\oplus f_\alpha(e_1))\oplus 0\oplus (f_\alpha(e_2)\oplus f_\alpha(e_1)\oplus f_\alpha(e_3))=f_\alpha(e_3)$.

\textbf{Theory.} The theory underlying the Bloomier Filter is highly non-trivial \cite{goodrich2011invertible}. The peeling process mentioned above is actually equivalent to finding the 2-core of random hypergraph. If the number of potential slots is greater than $c N$, where $c$ is any constant larger than 
$$c_j^{-1}:=\left(\sup \left\{\alpha\in(0,1): \forall x\in(0,1), 1-e^{-j\alpha x^{j-1}}<x\right\}\right)^{-1},$$
the peeling process will succeed with high probability $1-o(1)$. We denote $c_j^{-1}$ as the \textit{threshold} which achieves its minimum value $1.23$ when $j=3$. Recently, this result is optimized by \cite{walzer2021peeling} (2021) which proposes a new distribution of hyperedges. The author selects $j$ slots uniformly at random from a range of $|\m{U}|/(z+1)$ slots, improving the threshold close to $c_j'^{-1}$, where $c_j'^{-1}$ can be arbitrarily close to 1 (e.g. $c_3'^{-1} \approx 1.12, c_4'^{-1} \approx 1.05$ when $z=120$). In \textbf{Section \ref{perf}}, we set $j=3,z=120,$ and $C=1.13$ to conduct experiments.
\begin{Rmk}
To conclude, the Bloomier Filter can derive space-efficient approximate ($Cn\log 1/\epsilon$ bits) and exact ($C|\mathcal{U}|$ bits) membership filters, where $C$ can be an arbitrary constant close to 1. Interestingly, the two formulae have completely different forms, which seemingly leave room for improvement. This simple observation inspires us to bridge the gap between the two variants to obtain a better theoretical result.
\end{Rmk}
        \section{\algoname{}}\label{algorithm}

\subsection{Key Insight}\label{algo::key design}
We use the exact membership \algoname{} as an example to illustrate our key insight. Now let's recap the theoretical results of the approximate and the exact variants of Bloomier Filters (BF) shown in the \textbf{Remark} of \textbf{Section \ref{algo::warm up}}:
\begin{equation*}
\begin{cases}
f^{BF}({\epsilon, \lambda }) \leqslant f^{BF}(\epsilon,+\infty)= C\log 1/\epsilon &\text{(Approximate)};\\
f^{BF}({0, \lambda }) = C(\lambda +1) &\text{(Exact)}.
\end{cases}
\end{equation*}

According to \bb{Theorem \ref{thm::chain rule}}, we can losslessly factorize a membership problem ($0, \lambda$) into two sub-problems ($\epsilon',\lambda$) and ($0,\epsilon'\lambda$), and use Bloomier Filters as elementary filters to form \algoname{} s.t.
$$f^{CF}(0,\lambda)=f^{BF}(\epsilon',\lambda)+f^{BF}(0,\epsilon'\lambda)\leqslant C\log 1/\epsilon'+C(\epsilon'\lambda+1).$$
Because $\epsilon'$ can be any value between 0 and 1, we can minimize $f^{CF}(0,\lambda)$ to $C\log (2e\lambda \ln 2)$ by setting\footnote{We assume $\frac{1}{ \lambda \ln 2}<1$, otherwise it degenerates to the exact Bloomier Filter. For convenience, we \textbf{don't round} numbers here, but we'll do it in practice.} $\epsilon'=1/(\lambda\ln 2).$
Such an intuitive technique amazingly makes $f^{CF}(0,\lambda)$ less than $1.11f({0,\lambda})$. 
\begin{algorithm}
\renewcommand\baselinestretch{0.9}\selectfont
\LinesNumbered
  \caption{\algoname{} for Exact Membership Query}\label{algo::construct}
  \KwIn{Universe $\mathcal{U}$ and subset $\mathcal{S}$, $|\mathcal{U}|/|\mathcal{S}|= \lambda >1/\ln 2$.}  
  \KwOut{A filter $\mathcal{F}:\mathcal{U}\mapsto \{0,1\}$ s.t. $\mathcal{F}(e)=1$ iff $e\in S$.}
  \textbf{Function} Construct ($\mathcal{U},\mathcal{S}$):\\
  Set $\log 1/\epsilon=\lfloor \log \lambda\rfloor$ and the false positive set $\mathcal{S}' = \emptyset.$\\
  Construct an approximate membership filter $\mathcal{F}_1 $ s.t.  
  \begin{equation*}
      \begin{cases}
  \mathcal{F}_1(e)=1, & \forall e\in \mathcal{S};\\
  \mathbb{P}[\mathcal{F}_1(e)=0] = \epsilon, & \forall e\in \mathcal{U}\backslash\mathcal{S}.
  \end{cases}\end{equation*}
  \textbf{For all} $e\in\mathcal{U}\backslash\mathcal{S}$ \textbf{satisfying} $\mathcal{F}_1(e)=1:$ Insert $e$ into $\mathcal{S}'.$\\
  Construct an exact filter $\mathcal{F}_2 $ s.t.  \begin{equation*}
\begin{cases}
  \mathcal{F}_2(e)=1, & \forall e\in \mathcal{S};\\
  \mathcal{F}_2(e)=0, & \forall e\in \mathcal{S}'.\\
  \end{cases}      
  \end{equation*}
  \textbf{return} $\mathcal{F}(\cdot):=\mathcal{F}_1(\cdot)$ \& $\mathcal{F}_2(\cdot)$.
\end{algorithm}

This exact membership example demonstrates the potential of the chain rule. We summarize it in \textbf{Algorithm \ref{algo::construct}} and \textbf{Figure \ref{pic::algo1}}, and show its experimental performance in \textbf{Section \ref{perf::compact truth table}}. To construct \algoname{}, in step \textcircled{1}, we first encode all the positive items into a $Cn\log  (\lambda \ln 2)$-bit approximate Bloomier Filter with false positive rate $\epsilon=1/ (\lambda \ln 2)$ (Line 3). In step \textcircled{2}, we query all negative and collect the false positive ones into a set $\mathcal{S}'$ (Line 4). In step \textcircled{3}, we encode all the positive as well as all the false positive items (i.e. items in $\mathcal{S}'$) into a $Cn\log 2e$-bit Exact Bloomier Filter (Line 5). In this way, the overall space cost $nf^{CF}({0,  \lambda })$ is minimized. To query an item, we query both of the two filters and report their \texttt{AND} value as the result (Line 6).
\begin{figure}[h!tbp]
  \centering
   \includegraphics[width=0.45\textwidth]{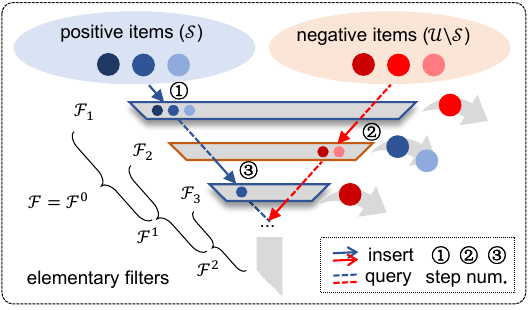}
\caption{\algoname{} shown in Section \ref{algo::key design}.}
\label{pic::algo1}
\end{figure}

\subsection{Generalization and Analysis}\label{algo::general}

In this part, we (1) extend the exact ($\epsilon=0$) \algoname{} to a general ($\epsilon\neq 0$) \algoname{} and show that the two-stage structure is space-optimal, (2) extend Bloomier Filters to an arbitrary number of elementary filters and reveal \algoname{}'s limitation. \bb{Since this part is only technically complex, skipping it does not affect the comprehension of this paper}. 

\subsubsection{\textbf{\algoname{} for general membership query}} 

Since the trivial theoretical bound $$f^{CF}({\epsilon, \lambda })\leqslant \min\left\{f^{CF}({0,(1-\epsilon) \lambda }),tf^{CF}({\epsilon, \lambda /t})\right\}(\forall t>1)$$ is too loose, we introduce additional inequalities to reduce error. The key observation is that an $C(\beta+1)n(\beta>0)$-bit Bloomier Filter can also filter out some negative items that are not encoded in the perfect hash table. Recall that in \textbf{Section \ref{algo::warm up}}, we say a Bloomier Filter encodes a one-bit fingerprint $f_1(e)= h_1(e)\in \{0,1\}$ (resp. $\sim h_1(e)$) for a positive (resp. negative) item. Now let's consider two strategies for generating $h_1(e)$. (a) The first strategy is $\mathbb{P}[h_\alpha(\cdot)=1]=1/2$, which means we flip a fair coin and record the face up side (resp. face down side) as the fingerprint of a positive (resp. negative) item. In this way, only 1/2 not-encoded negative items are false positive items. (b) The second strategy is $\mathbb{P}[h_\alpha(\cdot)=1]=1$, which means we directly record 1 (resp. 0) as the fingerprint of a positive (resp. negative) item. In this way, only $1/(\beta+1)$ not-encoded negative items are false positive items. These strategies give rise to two additional inequalities (BF represents \algoname{}. When the equation ``='' holds, the algorithm degenerates to one single Bloomier Filter)
\begin{equation*}
\begin{cases}
    f^{CF}({\epsilon, \lambda })/C\leqslant  \lambda +1-2\epsilon  \lambda \quad\quad\quad( \epsilon\leqslant2\lambda/(\lambda+1)) &(a); \\
    f^{CF}({\epsilon, \lambda })/C\leqslant  (\lambda +1)/(\epsilon  \lambda +1)\quad(\epsilon\leqslant 1/2    ) &(b).
\end{cases}
\end{equation*}
Similar to the exact \algoname{} (\textbf{Algorithm \ref{algo::construct}}), the generalized algorithm also consists of one approximate Bloomier Filter and one exact Bloomier Filter. The only difference is, the approximate Bloomier Filter requires $Cn\alpha=Cn(f^{CF}({\epsilon, \lambda })-\beta-1)$ bits and has a false positive rate of $1/2^\alpha$, while the Exact Bloomier Filter requires $Cn(\beta+1)$ bits and has a false positive rate of $\min\{1/2,1/(\beta+1)\}$. Using the two inequalities and the chain rule, we can determine the optimal parameter settings of $\alpha,\beta$ and thus $f^{CF}({\epsilon, \lambda })$. Because the calculation process is a bit dry, we present the results directly in \textbf{Corollary \ref{algo::Optimality}} and \textbf{Figure \ref{pic::space cost}}.
\begin{Cor}\label{algo::Optimality}
if we only combine two Bloomier Filters, the optimal space cost and the corresponding parameters are
$$f^{CF}({\epsilon, \lambda }):= \min\left\{f^{(a)}({\epsilon, \lambda }),f^{(b)}({\epsilon, \lambda })\right\},$$ 
where $f^{(a)}({\epsilon, \lambda }),f^{(b)}({\epsilon, \lambda })$ are defined as follows: 
\begin{equation*}
\begin{split}
    \text{(a)}&\text{ If } \lambda >\frac{1}{\ln2}\text{ and } \lambda <\frac{1}{2\epsilon\ln 2}\text{, then }\mathbb{P}[h_\alpha(\cdot)=1]=1/2,
    \end{split}
    \end{equation*}
    \begin{equation*}
\begin{split}
    &\beta_{(a)}=\frac{1}{\ln 2}-2 \lambda \epsilon\text{ and }f^{(a)}({\epsilon, \lambda })/C=\log (2e \lambda \ln 2)-2 \lambda \epsilon. \\
    &\text{ Otherwise $f^{(a)}({\epsilon, \lambda })$ degenerates to the space of approximate }\\
    & (\beta=0) \text{ or exact }(\alpha=0)\text{ Bloomier Filters.} \\
    \text{(b)}&\text{ If } \lambda >\frac{1}{\ln2-\epsilon}>0\text{, then }\mathbb{P}[h_\alpha(\cdot)=1]=1,\\
    &\beta_{(b)}=\frac{1}{\ln 2}-\frac{\epsilon  \lambda }{\epsilon  \lambda  + 1},\text{ and }f^{(b)}({\epsilon, \lambda })/C=\log \frac{2e \lambda \ln 2}{\epsilon  \lambda  + 1}-\frac{\epsilon  \lambda }{\epsilon  \lambda + 1}.  \\
    &\text{ Otherwise $f^{(b)}({\epsilon, \lambda })$ degenerates to the space of approximate }\\
    & (\beta=0)\text{ or exact }(\alpha=0)\text{ Bloomier Filters.} \\
\end{split}
\end{equation*}
\end{Cor}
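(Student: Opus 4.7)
The plan is to apply the chain rule (Theorem \ref{thm::chain rule}) to decompose the general membership problem $(\epsilon,\lambda)$ into two stages $(\epsilon_1,\lambda)$ and $(\epsilon_2,\epsilon_1\lambda)$ with $\epsilon_1\epsilon_2=\epsilon$, realize each stage by a Bloomier Filter, and then minimize the combined space cost over the free parameter $\epsilon_1$ under each of the two fingerprint strategies (a) and (b).

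First, stage 1 will be an approximate Bloomier Filter with $\alpha$-bit fingerprints, contributing $Cn\alpha$ bits and giving false positive rate $\epsilon_1=2^{-\alpha}$. Stage 2 must solve the residual sub-problem with ratio $\epsilon_1\lambda$ and target FPR $\epsilon_2=\epsilon/\epsilon_1$, so substituting $\lambda\mapsto\epsilon_1\lambda$ and $\epsilon\mapsto\epsilon/\epsilon_1$ into inequalities (a) and (b) of this subsection yields the admissible bounds $\beta+1\leq\epsilon_1\lambda+1-2\epsilon\lambda$ for strategy (a) and $\beta+1\leq(\epsilon_1\lambda+1)/(\epsilon\lambda+1)$ for strategy (b) on the stage-2 cost $Cn(\beta+1)$. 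The total space $Cn(-\log\epsilon_1+\beta+1)$ is then minimized by calculus after saturating whichever inequality is in use.

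For strategy (a), differentiating $-\log\epsilon_1+\epsilon_1\lambda$ with respect to $\epsilon_1$ gives the first-order condition $\epsilon_1=1/(\lambda\ln 2)$, which on substitution should recover $\beta_{(a)}=1/\ln 2-2\epsilon\lambda$ and $f^{(a)}/C=\log(2e\lambda\ln 2)-2\epsilon\lambda$. Strategy (b) is analogous but carries an extra factor $(\epsilon\lambda+1)^{-1}$ in the stage-2 bound, so its first-order condition becomes $\epsilon_1=(\epsilon\lambda+1)/(\lambda\ln 2)$, delivering the stated $\beta_{(b)}$ and $f^{(b)}$. Since we are free to pick whichever strategy yields the smaller total, taking the pointwise minimum of the two optima produces $f^{CF}(\epsilon,\lambda)$ as claimed.

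The main obstacle will be the careful handling of the boundary cases: the interior optimum is only admissible when $\epsilon_1\in(\epsilon,1)$ and $\beta>0$, which for (a) translates exactly into $\lambda>1/\ln 2$ together with $\lambda<1/(2\epsilon\ln 2)$, and for (b) into $\lambda>1/(\ln 2-\epsilon)$. Outside these ranges the constrained optimum is attained at $\alpha=0$ or $\beta=0$, so \algoname{} degenerates into a single exact or approximate Bloomier Filter, matching the degeneration clause stated in the corollary. As a sanity check I would verify that both $f^{(a)}$ and $f^{(b)}$ reduce to $\log(1/\epsilon)$ as $\lambda\to\infty$ and to $\lambda+1$ as $\epsilon\to 0$, recovering the two extreme cases already covered by Theorem \ref{algo::lower bound}.
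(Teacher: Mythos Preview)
Your proposal is correct and matches the paper's own approach: the paper explicitly sets up the two-stage decomposition with an approximate Bloomier Filter of cost $Cn\alpha$ and false positive rate $2^{-\alpha}$ followed by an exact Bloomier Filter of cost $Cn(\beta+1)$, invokes inequalities (a) and (b), and then states that ``using the two inequalities and the chain rule, we can determine the optimal parameter settings of $\alpha,\beta$ and thus $f^{CF}(\epsilon,\lambda)$. Because the calculation process is a bit dry, we present the results directly.'' Your calculus fills in exactly that omitted optimization, and your boundary analysis correctly recovers the degeneration conditions. One small slip: in your final sanity check, as $\epsilon\to 0$ the non-degenerate formula $f^{(a)}/C$ tends to $\log(2e\lambda\ln 2)$ (the exact \algoname{} bound of \S\ref{algo::key design}), not to $\lambda+1$; the latter is the single exact Bloomier Filter cost and only reappears in the degenerate regime $\alpha=0$.
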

\begin{figure}[h!tbp]
  \centering
   \includegraphics[width=0.48\textwidth]{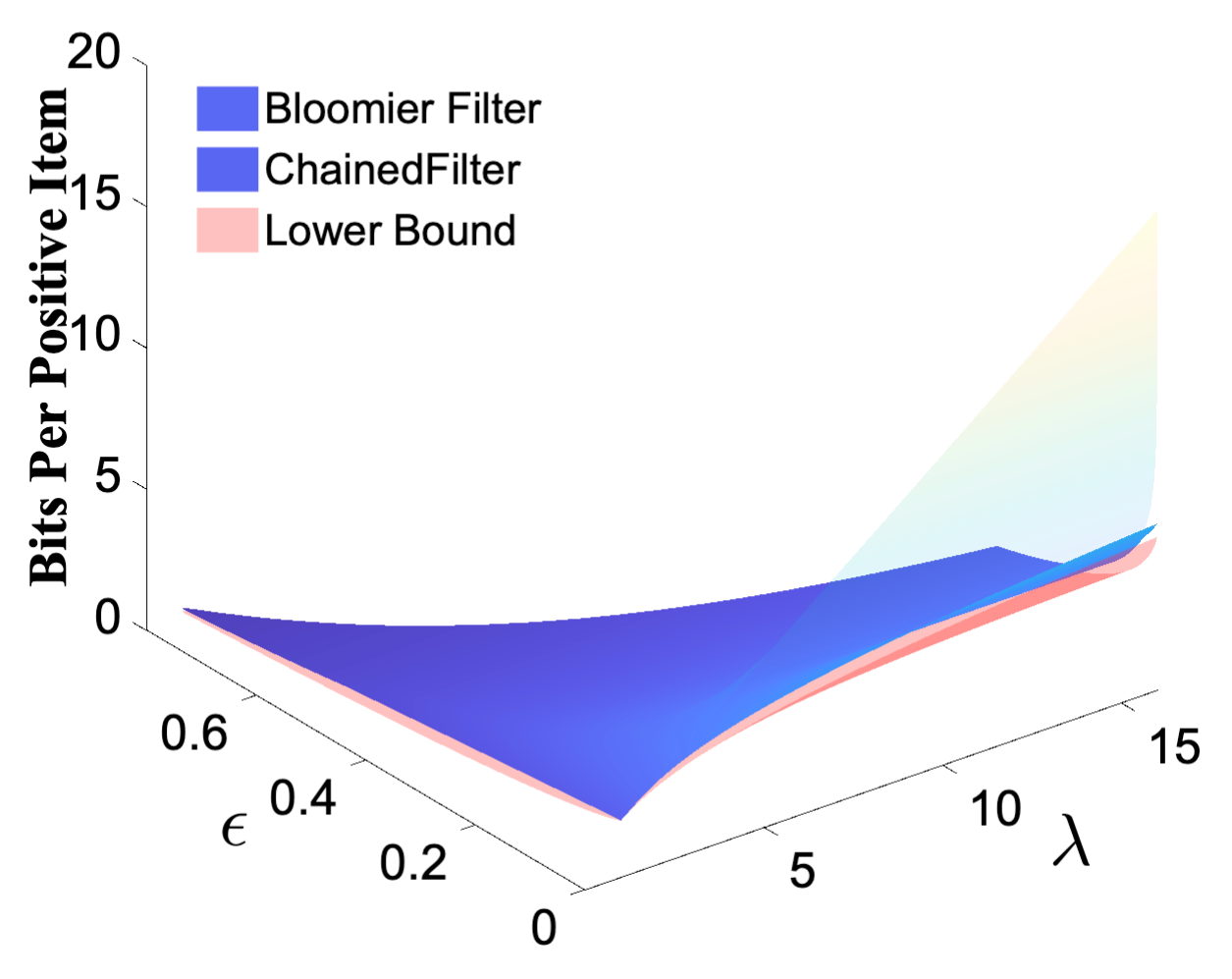}
\caption{Space cost when $C\to 1$. The multicolored surface is the minimum space cost of a single Bloomier Filter. When $\epsilon=0$ and $\lambda=16$, it is 210\% higher than the theoretical lower bound, while the space overhead of \algoname{} (the blue surface) is only 8\%.}
\label{pic::space cost}
\end{figure}

\subsubsection{\textbf{Two filters can be optimal}} It seems that only two stages is too trivial. But interestingly, we prove that combining two Bloomier Filters using operator ``\&'' is space-optimal.

\begin{theorem}\label{algo::optimality}
    (\textbf{Optimality}) Suppose we have a combined filter $\mathcal{F}:=\&_{i=1}^m\mathcal{F}_i$, where $\mathcal{F}_i$ are Bloomier Filters. Then we can prove that (CF represents \algoname{}) $$f^{\mathcal{F}}({\epsilon, \lambda })\geqslant f^{CF}({\epsilon, \lambda }).$$
\end{theorem}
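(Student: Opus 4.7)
The plan is to prove the bound $f^{\mathcal{F}}(\epsilon,\lambda)\geqslant f^{CF}(\epsilon,\lambda)$ by strong induction on the number $m$ of elementary Bloomier Filters. The base cases $m\in\{1,2\}$ follow directly from the definition in Corollary~\ref{algo::Optimality}, since that minimum already ranges over all two-stage Bloomier-Filter combinations (including the degenerate single-filter case obtained by setting $\alpha=0$ or $\beta=0$).

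For the inductive step ($m\geqslant 3$), I would single out $\mathcal{F}_1$ and denote its stand-alone false positive rate by $\epsilon_1\in[\epsilon,1]$. The $\epsilon_1\lambda n$ negatives that survive $\mathcal{F}_1$ together with the original $n$ positives form a sub-instance of type $(\epsilon/\epsilon_1,\epsilon_1\lambda)$, so the suffix $\&_{i=2}^m\mathcal{F}_i$ is an $(m-1)$-Bloomier chain solving this sub-problem. The inductive hypothesis gives $\mathrm{space}(\&_{i=2}^m\mathcal{F}_i)\geqslant nf^{CF}(\epsilon/\epsilon_1,\epsilon_1\lambda)$, while the single-filter space lower bound yields $\mathrm{space}(\mathcal{F}_1)\geqslant nf^{BF}(\epsilon_1,\lambda)$. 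Summing, the induction reduces to the key inequality
\[f^{BF}(\epsilon_1,\lambda)+f^{CF}(\epsilon/\epsilon_1,\epsilon_1\lambda)\geqslant f^{CF}(\epsilon,\lambda),\quad \forall\, \epsilon_1\in[\epsilon,1].\]

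The main obstacle is this inequality, for which my strategy is a collapsing argument. Unfolding $f^{CF}(\epsilon/\epsilon_1,\epsilon_1\lambda)$ via Corollary~\ref{algo::Optimality} exhibits the left-hand side as the bit cost of a three-stage chain: one Bloomier Filter $\mathcal{F}_1$ followed by an approximate-plus-exact pair. The crucial observation is that two approximate Bloomier Filters built on the same positive set $\mathcal{S}$ can be fused into a single one whose fingerprint length is the sum of the two originals, with identical total cost ($Cn\alpha_1+Cn\alpha_2=Cn(\alpha_1+\alpha_2)$) and matching product false positive rate. When $\mathcal{F}_1$ is approximate, I apply this fusion to $\mathcal{F}_1$ and the approximate half of the trailing CF, collapsing the three-stage chain into a two-stage chain whose cost is bounded below by $f^{CF}(\epsilon,\lambda)$ by definition. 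When $\mathcal{F}_1$ is exact, I first replace it by an equivalent approximate filter of the same space and false positive rate (an exact BF with $\beta=0$ under strategy~(a) coincides in both cost and FPR with an approximate BF of $\alpha=1$), after which the same fusion applies. A short case split following the parameter boundaries of Corollary~\ref{algo::Optimality} then handles the degenerate regimes where the inner $f^{CF}$ collapses to a pure approximate or pure exact filter.
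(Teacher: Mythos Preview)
Your inductive route differs from the paper's: the paper writes the $m$-chain cost explicitly as $\sum_{i=1}^m f^{BF}(\epsilon_i,\epsilon_1\cdots\epsilon_{i-1}\lambda)$ under inequality~(a), minimizes this sum directly over $(\epsilon_i)$ with $\prod_i\epsilon_i=\epsilon$ to obtain $(m+\lambda/2^{m-1})-2\epsilon\lambda$, and then optimizes over the integer $m$ to recover $f^{CF}$. No induction and no fusion step are used; the structure of the minimizer ($\epsilon_1=\cdots=\epsilon_{m-1}=1/2$) falls out of a short telescoping computation.

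Your reduction to the key inequality is sound, and the fusion argument is clean when $\mathcal{F}_1$ is approximate. The gap is the exact case. Your parenthetical only shows that an exact Bloomier Filter with $\beta=0$ coincides with an approximate one of $\alpha=1$; for $\beta>0$ no such equivalence exists. Under strategy~(a) an exact BF achieving false positive rate $\epsilon_1$ on $(\epsilon_1,\lambda)$ costs $C(\lambda+1-2\epsilon_1\lambda)$, while the approximate BF with the same $\epsilon_1$ costs $C\log(1/\epsilon_1)$, and for small $\lambda$ and small $\epsilon_1$ the exact filter is strictly \emph{cheaper}. Replacing it by the approximate filter would therefore lower the left-hand side of your key inequality, so you cannot inherit the bound from the approximate case. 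The key inequality does in fact hold---one can verify $\lambda(1-2\epsilon_1)+1+\log\epsilon_1\geqslant 0$ under the non-degeneracy constraint $\epsilon_1\lambda>1/\ln 2$ from Corollary~\ref{algo::Optimality}, and handle the degenerate regime separately---but your collapsing argument as written does not establish it, and the ``short case split'' you allude to is precisely where the real work lies.
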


\begin{proof}
We only consider inequality $(a)$, i.e. $f^{BF}({\epsilon, \lambda })/C= \lambda +1-2\epsilon  \lambda (\epsilon\leqslant2\lambda/(\lambda+1))$, where BF represents Bloomier Filter, because the case for inequality $(b)$ is similar.

Let $\epsilon:=\epsilon_1\epsilon_2...\epsilon_m$, according to the chain rule
\begin{equation*}
\begin{split}
f(\epsilon,\lambda)&=f(\epsilon_1,\lambda)+f(\frac{\epsilon}{\epsilon_1},\epsilon_1\lambda)\\
&=f(\epsilon_1,\lambda)+f(\epsilon_2,\epsilon_1\lambda)+f(\frac{\epsilon}{\epsilon_1\epsilon_2},\epsilon_1\epsilon_2\lambda)\\
&=f(\epsilon_1,\lambda)+f(\epsilon_2,\epsilon_1\lambda)+f(\epsilon_3,\epsilon_1\epsilon_2\lambda)+f(\frac{\epsilon}{\epsilon_1\epsilon_2\epsilon_3},\epsilon_1\epsilon_2\epsilon_3\lambda)\\
&=...\\
&=f(\epsilon_1,\lambda)+f(\epsilon_2,\epsilon_1\lambda)+...+f(\epsilon_m,\epsilon_1\epsilon_2...\epsilon_{m-1}\lambda),\\
\end{split}
\end{equation*}
we have
\begin{equation*}
\begin{split}
\min\limits_{\mathcal{F}}& f^{\mathcal{F}}({\epsilon, \lambda })/C= \min\limits_{\epsilon_1,\epsilon_2,...,\epsilon_m\in[0,1/2]}^{\epsilon_1 \epsilon_2 ... \epsilon_m=\epsilon}\sum\limits_{i=1}^{m}f^{BF}({\epsilon_i,\epsilon_1\epsilon_2...\epsilon_{i-1} \lambda })/C\\
&=m+(1-2\epsilon)  \lambda -\max\limits_{\epsilon_1,\epsilon_2,...,\epsilon_{m-1}\in[0,1/2]}\left(\sum\limits_{i=1}^{m-1}\prod\limits_{j=1}^{i}\epsilon_j\right) \lambda \\
&=m+(1-2\epsilon) \lambda -\left(1-\frac{1}{2^{m-1}}\right) \lambda=\left(m+\frac{\lambda}{2^{m-1}}\right)-2\epsilon\lambda.
\end{split}
\end{equation*}
\begin{equation*}
\text{When }\begin{cases}
    m=\lfloor\log  \lambda \rfloor +1; \\
    \epsilon_1=...=\epsilon_{m-1}=1/2;\\
    \epsilon_m=2^{m-1}\epsilon,
\end{cases}
\end{equation*}
The formula achieves the minimum value 
\begin{equation*}
f^{\mathcal{F}}({\epsilon, \lambda })=\lfloor\log  \lambda \rfloor+1+\frac{ \lambda }{2^{\lfloor\log  \lambda \rfloor}}-2\epsilon  \lambda=f^{CF}({\epsilon, \lambda }).
\end{equation*}

So \algoname{} is space-optimal.
\end{proof}
\begin{Rmk}When $\epsilon=0$, the rounded space cost of \algoname{} is 
$$f^{CF}({0, \lambda })=C(\lfloor\log  \lambda \rfloor+1+\frac{ \lambda }{2^{\lfloor\log  \lambda \rfloor}})< 1.11Cf({0,  \lambda }).$$
 We will recap this result in \textbf{Section \ref{perf::compact truth table} and \ref{perf::randomized huffman coding}}.
\end{Rmk}
%If $\epsilon \geqslant 1/2$ or $k\leqslant 1/\ln 2$, then \algoname{} degenerates to approximate ($\beta=0$) or exact ($\alpha=0$) membership filter respectively; (2) Else if $\epsilon < \frac{q-1}{q}\ln 2$ and $\frac{1}{\ln 2-\epsilon}<k<\frac{q-1}{\epsilon}$ where $q\approx 1.33$ satisfying $\log q-2q-\frac{1}{q}+3=0$ , then $\beta=\frac{1}{\ln 2}-\frac{\epsilon k}{\epsilon k + 1},$ and $L_{\epsilon,k}=\log \frac{2ek\ln 2}{\epsilon k + 1}-\frac{\epsilon k}{\epsilon k + 1}$; (3) Else $\beta=\frac{1}{\ln 2}-2k\epsilon$ and $L_{\epsilon,k}=\log (2ek\ln 2)-2k\epsilon$. Although $L_{\epsilon, k}$ cannot approach $\mathcal{L}_{\epsilon,k}$ for all $(\epsilon,k)\neq(0,1)$, \algoname{} provides better theoretical result than existing works when $\epsilon<1/2$ and $k>1/\ln 2$.

% \begin{Rmk}When $\epsilon=0$, the rounded space cost of \algoname{} is 
% $$f^{CF}({0, \lambda })=C\left(\lfloor\log  \lambda \rfloor+1+\frac{ \lambda }{2^{\lfloor\log  \lambda \rfloor}}\right).$$
%  We will recap this result in \textbf{Section \ref{perf::compact truth table} and \ref{perf::randomized huffman coding}}.
% \end{Rmk}
\subsubsection{\textbf{Limitation of the operator ``\&''}} 
When the elementary filters are not limited to Bloomier Filters, the optimal combined filter can be very complex. In fact, we may not even know how many elementary filters we should use. However, we find that if we continue to use the ``\&'' operator to combine filters, we can derive the tight space lower bound of the combined filter.
%Although two Bloomier Filters is optimal combined filter using ''$\&$'' operator, the conclusion may be different if we replace the Bloomier Filters with other (maybe we can find in the future) elementary filters (\textbf{Theorem \ref{algo::limit}}), or if we use other operators (\textbf{Section \ref{algo::dsbe}}). 

\begin{theorem}\label{algo::limit}
(\textbf{Limitation}) 
Given arbitrary elementary filters (EF) $\mathcal{F}_i$ with a restriction of $f^{EF}({\epsilon, \lambda })$ and an arbitrary combined filter $\mathcal{F}:=\&_{i=1}^m\mathcal{F}_i$. We define $\Psi^{EF}({\epsilon, \lambda }):[0,1]\times \mathbb{R}^+\mapsto \mathbb{R}$ as any function satisfying
\begin{equation*}
\begin{cases}
f({\epsilon, \lambda })\leqslant \Psi^{EF}({\epsilon, \lambda })\leqslant f^{EF}({\epsilon, \lambda })&\forall \epsilon \in[0,1],\\
\Psi^{EF}({\epsilon_1\epsilon_2, \lambda })\leqslant \Psi^{EF}({\epsilon_1, \lambda })+\Psi^{EF}({\epsilon_2,\epsilon_1  \lambda })&\forall \epsilon_1,\epsilon_2\in[0,1].
\end{cases}
\end{equation*}
Then we have
\begin{equation*}
\inf f^{\mathcal{F}}({\epsilon, \lambda })=\sup \Psi^{EF}({\epsilon, \lambda }).
\end{equation*}
\end{theorem}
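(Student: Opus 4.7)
The plan is to prove the equality via two complementary inequalities: $\sup \Psi^{EF}(\epsilon,\lambda)\leq \inf f^{\mathcal{F}}(\epsilon,\lambda)$ and $\inf f^{\mathcal{F}}(\epsilon,\lambda)\leq \sup \Psi^{EF}(\epsilon,\lambda)$.

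For the upper-bound direction ($\sup\leq\inf$), I would fix an arbitrary combined filter $\mathcal{F}=\&_{i=1}^{m}\mathcal{F}_i$ and an arbitrary valid $\Psi^{EF}$. Writing $\epsilon_i$ for the false-positive rate of the $i$-th stage with $\prod_{i=1}^{m}\epsilon_i=\epsilon$, the chain-rule structure (exactly as used in the telescoping computation of Theorem~\ref{algo::optimality}) implies that the $i$-th elementary filter sees effective negative-positive ratio $\epsilon_1\cdots\epsilon_{i-1}\lambda$ and contributes $f^{EF}(\epsilon_i,\epsilon_1\cdots\epsilon_{i-1}\lambda)$ to the total space. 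Telescoping the hypothesized subadditivity of $\Psi^{EF}$ then yields
\begin{equation*}
\Psi^{EF}(\epsilon,\lambda)\leq \sum_{i=1}^{m}\Psi^{EF}(\epsilon_i,\epsilon_1\cdots\epsilon_{i-1}\lambda)\leq \sum_{i=1}^{m}f^{EF}(\epsilon_i,\epsilon_1\cdots\epsilon_{i-1}\lambda)=f^{\mathcal{F}}(\epsilon,\lambda),
\end{equation*}
where the second step applies $\Psi^{EF}\leq f^{EF}$ term by term. Taking supremum over $\Psi^{EF}$ on the left and infimum over $\mathcal{F}$ on the right closes this direction.

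For the reverse direction ($\inf\leq\sup$), I would show the function $\Psi^{\ast}(\epsilon,\lambda):=\inf_{\mathcal{F}}f^{\mathcal{F}}(\epsilon,\lambda)$ is itself a valid candidate for the role of $\Psi^{EF}$, which immediately gives $\sup \Psi^{EF}\geq \Psi^{\ast}$. For the sandwich condition, setting $m=1$ with a single elementary filter witnesses $\Psi^{\ast}\leq f^{EF}$, while the information-theoretic lower bound of Theorem~\ref{algo::lower bound} applies to any filter (combined or not) and gives $\Psi^{\ast}\geq f$. For subadditivity, given any near-optimal combined filters $\mathcal{F}^{(1)}$ for $(\epsilon_1,\lambda)$ and $\mathcal{F}^{(2)}$ for the residual problem $(\epsilon_2,\epsilon_1\lambda)$ it induces, the ``\&''-concatenation $\mathcal{F}^{(1)}\,\&\,\mathcal{F}^{(2)}$ is a valid combined filter for $(\epsilon_1\epsilon_2,\lambda)$, hence $\Psi^{\ast}(\epsilon_1\epsilon_2,\lambda)\leq f^{\mathcal{F}^{(1)}}+f^{\mathcal{F}^{(2)}}$, and passing to infima on the right yields $\Psi^{\ast}(\epsilon_1\epsilon_2,\lambda)\leq \Psi^{\ast}(\epsilon_1,\lambda)+\Psi^{\ast}(\epsilon_2,\epsilon_1\lambda)$.

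The main obstacle is conceptual rather than computational: carefully justifying that after the first $i{-}1$ stages have filtered out negatives, the sub-problem handed to stage $i$ really has (up to $o(1)$ terms) negative-positive ratio $\epsilon_1\cdots\epsilon_{i-1}\lambda$, so that the elementary-filter cost curve $f^{EF}$ is invoked with the correct arguments and summed without double counting. Non-attainment of the infima is handled by a standard $\varepsilon$-slack argument absorbed into the already-suppressed $o(1)$; no new quantitative calculations beyond those developed for Theorem~\ref{algo::optimality} should be required.
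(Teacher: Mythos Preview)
Your proposal is correct and follows essentially the same route as the paper: for $\sup\le\inf$ you telescope the subadditivity of $\Psi^{EF}$ across the $m$ stages and bound each term by $f^{EF}$, which is exactly what the paper does via induction on $m$; for $\inf\le\sup$ you verify that $\Psi^\ast=\inf f^{\mathcal F}$ is itself an admissible $\Psi^{EF}$ (sandwich from $m=1$ and Theorem~\ref{algo::lower bound}, subadditivity from \&-concatenation), which is precisely the paper's closing assertion. The one place the paper is slightly more careful is the point you flag as the ``main obstacle'': rather than assuming stage $i$ sees ratio exactly $\epsilon_1\cdots\epsilon_{i-1}\lambda$, the paper allows the last stage to see any $\lambda'\ge\epsilon'\lambda$ with $\epsilon'\epsilon''\le\epsilon$ and then uses (implicit) monotonicity of $\Psi^{EF}$ in both arguments to pass to the ideal values---you should make that monotonicity step explicit in your write-up rather than absorbing it into $o(1)$.
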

\begin{proof}

First, we assert that (1) $\inf f^{\mathcal{F}}({\epsilon, \lambda })$ exists. This is because  $f^{\mathcal{F}}({\epsilon, \lambda })$ exists ($f^{\mathcal{F}}({\epsilon, \lambda })\leqslant f^{EF}({\epsilon, \lambda })$) and has an lower bound $f({\epsilon, \lambda })$. (2) $\sup \Psi^{EF}({\epsilon, \lambda })$ exists. This is because  $\Psi^{EF}({\epsilon, \lambda })$ exists ($\Psi^{EF}({\epsilon, \lambda })\geqslant f({\epsilon, \lambda })$) and has an upper bound $f^{EF}({\epsilon, \lambda })$.

Second, we prove that $f^{\mathcal{F}}({\epsilon, \lambda })\geqslant \Psi^{EF}({\epsilon, \lambda })$ by mathematical induction. When $m=1$ (note that $m$ is the number of elementary filters), we observe $f^{\mathcal{F}}({\epsilon, \lambda })=f^{EF}({\epsilon, \lambda })\geqslant \Psi^{EF}({\epsilon, \lambda }).$ Assume that when $ m=m_0-1$ we have $f^{\mathcal{F}}({\epsilon, \lambda })\geqslant \Psi^{EF}({\epsilon, \lambda })$. Then, when $m=m_0,$ we can rewrite $\mathcal{F}$ as $(\&_{i=1}^{m_0-1}\mathcal{F}_i)\&\mathcal{F}_{m_0}.$ W.l.o.g., we can assume that $\&_{i=1}^{m_0-1}\mathcal{F}_i$ encodes $ \lambda n$ negative items with false positive rate $\epsilon'$, and $\mathcal{F}_{m_0}$ encodes $ \lambda 'n ( \lambda '\geqslant \epsilon' \lambda )$ negative items with false positive rate $\epsilon'' (\epsilon'\epsilon''\leqslant \epsilon)$. Thus 
\begin{equation*}
    \begin{split}
        f^{\mathcal{F}}({\epsilon, \lambda })&\geqslant\Psi^{EF}({\epsilon', \lambda })+\Psi^{EF}({\epsilon'', \lambda '})\\
        &\geqslant \Psi^{EF}({\epsilon', \lambda })+\Psi^{EF}({\epsilon'',\epsilon  \lambda })\\
        &\geqslant \Psi^{EF}({\epsilon'\epsilon'',  \lambda })\geqslant \Psi^{EF}({\epsilon,  \lambda }).
    \end{split}
\end{equation*}
Therefore, $\forall m\in \mathbb{N}^+$ we have $$f^{\mathcal{F}}({\epsilon, \lambda })\geqslant \Psi^{EF}({\epsilon, \lambda }) \Rightarrow \inf f^{\mathcal{F}}({\epsilon, \lambda })\geqslant \sup\Psi^{EF}({\epsilon, \lambda }).$$

Finally, we assert that 
\begin{equation*}
    \begin{split}
        \inf f^{\mathcal{F}}(&{\epsilon_1\epsilon_2, \lambda })\leqslant \inf f^{\mathcal{F}}({\epsilon_1, \lambda })+\inf  f^{\mathcal{F}}({\epsilon_2,\epsilon_1  \lambda })\\
        &\Rightarrow\inf f^{\mathcal{F}}({\epsilon, \lambda })\leqslant \sup \Psi^{EF}({\epsilon, \lambda }).
    \end{split}
\end{equation*}

Therefore, we conclude that $\inf f^{\mathcal{F}}({\epsilon, \lambda })=\sup \Psi^{EF}({\epsilon, \lambda}).$
\end{proof}

\begin{Rmk}\textbf{Theorem \ref{algo::limit}} shows the limitation of the operator ``\&'': if we solely use the ``\&'' operator, the space cost of a combined filter cannot be less than $\sup \Psi_{\epsilon, \lambda }^{EF}$. However, if we utilize the chain rule with other operators such as ``$\&\sim$'', then the situation may be different (\textbf{Section \ref{algo::dsbe}}).
\end{Rmk}

\subsection{\algoname{} is a Framework}\label{algo::framework}

In \textbf{Section \ref{algo::key design}}, we combine Bloomier Filters with operator ``\&'' to construct \algoname{}. However, this design has two limitations. First, Bloomier Filter only supports static membership query, which means that any new item added may require a reconstruction of the entire data structure. Second, although the Bloomier Filter itself only requires $O(n)$ space, its construction process requires an additional $\Omega(n\log n)$ space (as the peeling process relies on a good ordering). In this part, we present some extensions to \algoname{} that overcomes these shortcomings to some extent.

\subsubsection{Replace elementary filters}\label{algo::replace elementary filters}
\ 

For the first limitation, we can replace the Bloomier Filter(s) with other dynamic elementary filter(s) to support online updates. 
For instance, we can replace the static second stage (exact) Bloomier Filter with a dynamic filter such as Othello Hashing \cite{yu2018memory} or Coloring Embedder \cite{tong2019coloring}, at the expense of additional space\footnote{These two algorithms map each item to an edge in a random hypergraph and then color the nodes the same for positive items and different for negative items. Othello Hashing / Coloring Embedder requires 2.33/2.2 bits per item (compared to $C<1.13$ bits per item cost of Bloomier Filter) but support online updates.}. With the help of the dynamic ``whitelist'', the new version of \algoname{} supports the exclusion of new negative items without causing any false negative. %To describe the exclusion process, we can compare the positive items to spam, and to compare the approximate (exact) filter to \textit{blacklist} (\textit{whitelist}). To exclude a false item (real email), SD\algoabbr{} first check if it is wrongly in the blacklist. If so, the SD\algoabbr{} inserts the item into the whitelist. 
Similarly, we can further replace the static first stage (approximate) Bloomier Filter, which requires $C\log 1/\epsilon$ bits per item, with a dynamic filter such as a Bloom Filter \cite{bloom1970space}, which requires $(\log 1/\epsilon)/\ln 2$ bits per item, or a Cuckoo Filter \cite{fan2014cuckoo}, which requires $1.05(2+\log 1/\epsilon)$ bits per item. In this way, the new version of \algoname{} supports not only the  exclusion of new negative items but also the inclusion (insertion) of new positive items with a small false positive rate. These enhancements enable \algoname{} to be used in more dynamic scenarios where new items are frequently added. 

While we present some compensations, it is important to note that \algoname{} does not perfectly align with dynamic scenarios. This is because we need to determine all false positive items before we construct the second stage filter. At the theoretical level, we can even prove that {the chain rule does not hold for general dynamic memberships}. This is supported by prior work \cite{arbitman2010backyard}, which shows that a dynamic exact membership problem only requires $nf'(0,\lambda)=(1+o(1))nf(0,\lambda)$ bits, and \cite{lovett2013space}, which demonstrates that a dynamic approximate membership problem costs $nf'(\epsilon,+\infty)=nC(\epsilon)$ $f(\epsilon,+\infty)$ bits, where $C(\epsilon)>1$ depends solely on $\epsilon$ (This implies that there exists a $\lambda>0$ for which $f'(\epsilon,\lambda)>f(\epsilon,\lambda)$). So the inequality
$$f'(0,\lambda)=f(0,\lambda)=f(\epsilon,\lambda)+f(0,\epsilon\lambda)< f'(\epsilon,\lambda)+f'(0,\epsilon\lambda)$$
serves as a counterexample to the chain rule (\bb{Theorem \ref{thm::chain rule}}).

%Furthermore, it also support some association (e.g. intersection/union) query\footnote{For intersection (resp. union) query, we merge both of the two filters AND (resp. OR) operators. We can prove all these operations keep the unilateral property.}.

\subsubsection{Replace the combining operator ``\&''}\label{algo::dsbe}
\

For the second limitation, we can design a space-efficient (i.e. $O(nf({0,\lambda}))$) exact filter with no additional construction space based on the chain rule. %Following a similar methodology to the \textit{chain rule}, we can simply extent our \algoname{} to such an algorithm namely \textbf{Dynamically Structured \algoname{}} (\textbf{DS\algoabbr{}}).

\textbf{Overview.} \text{Our key idea is} to replace the combining operator ``\&'' (i.e. {$\&_{i=1}^m\mathcal{F}_i$}) with ``$\&\sim$'' and recursively define \begin{equation*}\begin{cases}
    \mathcal{F}(\cdot):=\mathcal{F}^0(\cdot);\\
    \mathcal{F}^i(\cdot):=\mathcal{F}_{i+1}(\cdot)\&\sim \mathcal{F}^{i+1}(\cdot), \forall i\in \mathbb{N} \text{ (}\sim \text{means \texttt{NOT}}).
\end{cases}
\end{equation*}
In this formula, every elementary filter $\mathcal{F}_i(\cdot)$ is a approximate filter like Bloom Filter or Cuckoo Filter, and $\mathcal{F}_{i+1}$ is the whitelist of $\mathcal{F}_i$ (\textbf{Figure \ref{pic::algo2}}). To construct the filter, in step \textcircled{1}, we insert all positive items into $\mathcal{F}_1$; in step \textcircled{2}, we insert all false positive items of $\mathcal{F}_1$ into $\mathcal{F}_2$; in step \textcircled{3}, we insert all false positive items of $\mathcal{F}_2$ into $\mathcal{F}_3$, and so on. If we need an exact filter, this process iterates until no false positive items are left. It's easy to find that \algoname{} has no additional construction space, as all items can be placed on the input tape of the Turing Machine model.
\begin{figure}[h!tbp]
  \centering
   \includegraphics[width=0.48\textwidth]{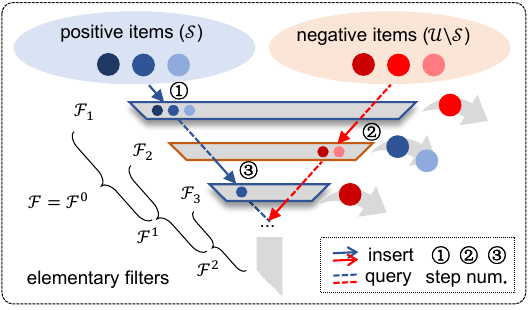}
\caption{The ``$\&\sim$'' version shown in Section \ref{algo::dsbe}.}
\label{pic::algo2}
\end{figure}

\textbf{Analysis.} Here we analysis the space cost of the new design (denoted as $nf^{CF}({0, \lambda })$ bits) and derive related parameter settings. 
\begin{theorem}\label{algo::variant} 
(\textbf{Space Cost}) Suppose an approximate filter costs $C'\log 1/\epsilon$ bits per item ($C'=\log e=1.44$ for Bloom Filter), then
    $$\inf f^{CF}({0, \lambda })=C'\log 4e\lambda.$$
\end{theorem}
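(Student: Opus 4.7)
The plan is to realize the cost of the ``$\&\sim$'' construction as the infimum of a one-variable recurrence, exhibit $\log(4e\lambda)$ as the corresponding fixed point, and match it with a lower bound derived via the chain rule. First I would translate the recursive construction $\mathcal{F}^{i}=\mathcal{F}_{i+1}\,\&\,\neg\mathcal{F}^{i+1}$ into a cost recurrence: applying $\mathcal{F}_{1}$ with false positive rate $\epsilon_{1}$ costs $C'\log(1/\epsilon_{1})$ bits per positive item and leaves an ``identify-the-false-positives'' sub-problem with $\epsilon_{1}\lambda n$ sub-positives and $n$ sub-negatives, i.e.\ exact membership at ratio $1/(\epsilon_{1}\lambda)$, addressed recursively by the same scheme. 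Setting $T(\lambda):=f^{CF}(0,\lambda)/C'$ this gives
\[
T(\lambda)\;=\;\inf_{\epsilon_{1}\in(0,1]}\bigl[\log(1/\epsilon_{1})+\epsilon_{1}\lambda\,T\bigl(1/(\epsilon_{1}\lambda)\bigr)\bigr].
\]

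Next I would verify $T(\lambda)=\log(4e\lambda)$ as a self-consistent fixed point. Writing $y:=\epsilon_{1}\lambda$ and substituting the ansatz, the bracketed expression becomes $\log\lambda-\log y+y\log(4e)-y\log y$; the first-order condition in $y$ determines a unique critical $y^{\ast}$, and after substituting $y^{\ast}$ back in the expression collapses to $\log(4e\lambda)$. This simultaneously (i) exhibits the stage-wise false positive rates $\epsilon_{1}^{\ast}=y^{\ast}/\lambda$, $\epsilon_{i}^{\ast}=y^{\ast}/\lambda_{i-1}$ that achieve the bound under the same ansatz at every level, and (ii) gives the achievability direction $\inf T(\lambda)\le\log(4e\lambda)$.

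For the matching lower bound I would appeal to the ``operator-limitation'' argument of \textbf{Theorem~\ref{algo::limit}} adapted to the ``$\&\sim$'' operator. Choosing an envelope $\Psi^{EF}(\epsilon,\mu)$ that equals $C'\log(4e\mu)$ at $\epsilon=0$, is dominated by the elementary-filter cost $C'\log(1/\epsilon)$, and is chain-rule-subadditive, $\Psi^{EF}(\epsilon_{1}\epsilon_{2},\lambda)\le\Psi^{EF}(\epsilon_{1},\lambda)+\Psi^{EF}(\epsilon_{2},\epsilon_{1}\lambda)$, the inductive argument of Theorem~\ref{algo::limit} goes through: the subadditivity check reduces to the same algebraic identity that made the ansatz in Step 2 self-consistent, and its verification is therefore free. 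Hence no combination of approximate elementary filters under ``$\&\sim$'' can beat $C'\log(4e\lambda)$, giving the matching lower bound.

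The main obstacle is the middle step: the first-order condition in $y$ is transcendental, so rather than solving for $y^{\ast}$ explicitly I would argue by strict convexity of the one-variable surrogate $g(y):=y\log(4e)-(1+y)\log y$ on the relevant interval that its minimum equals exactly $\log(4e)$, which is precisely the constant term that closes the recurrence. A secondary subtlety is the regime $\lambda<1/(4e)$ where $\log(4e\lambda)<0$ would be meaningless as a bit count; this can be handled either by truncating $T$ at $0$ (no sub-filter once the residual sub-problem becomes trivial) or by restricting attention to the applied regime $\lambda\gg 1$, which is where the asymptotic statement is intended.
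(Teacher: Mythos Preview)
Your approach parallels the paper's: both set up the same recurrence, reduce to $T(\lambda)=\log\lambda+\text{constant}$, and pin down the constant by optimizing over the stage parameter; the paper additionally exhibits the one-parameter family $T_\delta(\lambda)=\log\lambda+\frac{1+\delta}{1-\delta}\log(1/\delta)$ and sends $\delta\to1$ via L'H\^opital. Your Step~3, invoking a Theorem~\ref{algo::limit}-style envelope for the matching lower bound, is more than the paper actually attempts.

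There is a genuine gap in Step~2, however: the assertion that $\min_y g(y)=\log(4e)$ for $g(y)=y\log(4e)-(1+y)\log y$ is false. Directly,
\[
g\!\left(\tfrac12\right)=\tfrac12\log(4e)+\tfrac32=\tfrac52+\tfrac12\log e\approx 3.22<2+\log e=\log(4e)\approx3.44,
\]
and since $g''(y)=(1-y)/(y^{2}\ln 2)$ changes sign at $y=1$, the ``strict-convexity'' fallback does not apply either. So $\log(4e\lambda)$ is \emph{not} a fixed point of your recurrence: plugging it in drives the right-hand side strictly below the left, and the same discrepancy propagates to the Step~3 envelope at $\epsilon=0$. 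In fact the paper's own L'H\^opital step yields $\lim_{\delta\to1}\frac{1+\delta}{1-\delta}\log(1/\delta)=2\log e=\log e^{2}$, not $\log(4e)$, so the constant in the stated theorem appears to be a slip for $e^{2}$. Your plan is sound and would close cleanly with $e^{2}$ in place of $4e$; with the stated constant it cannot.
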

\begin{proof}
Our proof is based on a simple observation: Exactly classifying $n$ positive items and $ \lambda  n$ negative items is equivalent to exactly classifying $\lambda\times n$ positive items and $\lambda \times(1/\lambda) n$ negative items. Formally speaking, we have $f({0,\lambda})=\lambda f({0,1/\lambda})$. 
We use the above formula to determine the parameter settings, and the following proof is only technically complicated. 

W.l.o.g. we let $\lambda\geqslant 1$. According to the chain rule
\begin{equation*}
\begin{split}
f(0,\lambda)=f(\epsilon,\lambda)+f(0,\epsilon\lambda)=f(\epsilon,\lambda)+\epsilon\lambda f(0,1/(\epsilon\lambda)),
\end{split}
\end{equation*}
we have
\begin{equation*}
\begin{split}
    \inf f^{CF}({0, \lambda })/C'&=\inf\limits_{\epsilon<1}\left\{ \log \frac{1}{\epsilon}+\epsilon  \lambda f^{CF}({0,\epsilon  \lambda })/C'\right\}\\
    &=\log  \lambda +\inf\limits_{\epsilon<1}\left\{\log \frac{1}{\epsilon  \lambda }+\epsilon  \lambda f^{CF}({0,\epsilon  \lambda })/C'\right\}\\
    &=\log \lambda + \inf\limits_{x<\lambda}\left\{ \log \frac{1}{x }+x \inf f^{CF}({0,\epsilon  \lambda })/C'  \right\}\\
    &=\log \lambda + \text{Constant}.
\end{split}
\end{equation*}
Since $\inf f^{CF}({0,\epsilon  \lambda })$ is derivable, so the right hand side achieves the minimum value when
\begin{equation*}
\begin{split}
     \frac{\mathrm{d}}{\mathrm{d}x}(\log\frac{1}{ x} + x \inf f^{CF}({0,\epsilon  \lambda })/C')=0
\end{split}  
\end{equation*}
\begin{equation*}
\begin{split}
     \text{Therefore }\begin{cases}
         \epsilon  \lambda =x = 1;\\
         \inf f^{CF}({0,\epsilon  \lambda })=C'\log  4e\lambda.
     \end{cases}
\end{split}  
\end{equation*}
Actually, according to L'Hôpital's rule, if
\begin{equation*}
\begin{split}
    &f^{CF}({0,\epsilon  \lambda })\equiv C'\left(\log  \lambda  +\log \frac{1}{\delta}+\delta f^{CF}({0,\epsilon  \lambda })\right)(\delta<1), \text{ we have}\\
    &f^{CF}({0,\epsilon  \lambda })=C'\left(\log  \lambda +\frac{1+\delta}{1-\delta}\log \frac{1}{\delta}\right)\to C'\log  4e\lambda (\delta \to 1).
\end{split}  
\end{equation*}
We show our algorithm in \textbf{Algorithm \ref{algo::DSBE}}.
\end{proof}

\renewcommand\arraystretch{1.4}
\begin{table*}[h!tbp]
	\centering
	\begin{tabular}{lcc@{\hskip 25pt}|ccc}
		\bottomrule
		\textbf{Properties}&\textbf{``\&'' version}&\textbf{``$\&\sim$'' version} & \textbf{Approximate}&\textbf{Exact}&\multirow{2}*{\textbf{Lower bound}}\\ 
		\textbf{($\epsilon=0$)}&(\textbf{Algorithm \ref{algo::construct}})&(\textbf{Algorithm \ref{algo::DSBE}})&\bb{Bloomier}& \bb{Bloomier}&~ \\
            \hline
		\textbf{Construction time}&$O(|\mathcal{U}| )$&$O(|\mathcal{U}| \log\log n)$&$O(n)$&$O(|\mathcal{U}|)$&Linear \\
            \hline
            \textbf{Filter space (bits)}&    $\approx Cn\log (2e\lambda\ln2)$ & $\approx  C'n\log 4e\lambda$ &$O(n\log \lambda n)$&$\approx C|\mathcal{U}|$&$n(\log \lambda+(\lambda+1)\log(1+1/\lambda))$ \\
            \hline
            \textbf{Additional space}&$\Omega(n\log n)$&/&/&$\Omega(|\mathcal{U}|\log|\mathcal{U}|)$&/ \\
		\toprule
	\end{tabular}
 \caption{Summary of \algoname{} variants}\label{algo::summary}
\end{table*}

\begin{algorithm}
\renewcommand\baselinestretch{0.9}\selectfont
\LinesNumbered
  \caption{Exact \algoname{} (using operator ``$\&\sim$'')}\label{algo::DSBE}
  \KwIn{Universe $\mathcal{U}$ and subset $\mathcal{S}$, $|\mathcal{U}|/|\mathcal{S}|=\lambda>1$.}  
  \KwOut{A filter $\mathcal{F}:\mathcal{U}\mapsto \{0,1\}$ s.t. $\mathcal{F}(e)=1$ iff $e\in S$.}
  \textbf{Function} Construct ($\mathcal{U},\mathcal{S}$):\\
  Set $\kappa =  \lambda , \delta \in (0,1),  \mathcal{S}_T=\mathcal{S},\mathcal{S}_F=\mathcal{U}\backslash\mathcal{S},i=1.$\\
  \textbf{While} $\mathcal{S}_F\neq \emptyset:$\\
  \quad\quad Construct an approximate filter $\mathcal{F}_i $ s.t. 
  \begin{equation*}
  \begin{cases}
  \mathcal{F}_i(e)=1, & \forall e\in \mathcal{S}_T;\\
  \mathbb{P}[\mathcal{F}_i(e)=0] \leqslant \delta/\kappa, & \forall e\in \mathcal{S}_F.
  \end{cases}
  \end{equation*}
  \\
  \quad\quad $\kappa \leftarrow 1/\delta, \mathcal{S}_T\leftarrow \mathcal{S}_F, \mathcal{S}_F\leftarrow \{e\in \mathcal{S}_F: \mathcal{F}_i(e)=0\}.$\\
  \quad\quad$i\leftarrow i+1$.\\
  \textbf{return} $\mathcal{F}(\cdot):=\mathcal{F}^0(\cdot)$, where $\mathcal{F}^j(\cdot):=\mathcal{F}_{j+1}(\cdot)\&\sim \mathcal{F}^{j+1}(\cdot), \forall j\in [0..i-2]$.
\end{algorithm}
\begin{Rmk}
    In practice, we can set $\delta=1/2$, round up the space cost of $\mathcal{F}_1$ to $C'n\lceil\log \lambda/\delta\rceil$ bits, and round up the space cost of $\mathcal{F}_i(i\geqslant 2)$ to $C'n2\delta^{i-1}\log 1/\delta=C'n2^{2-i}$ bits. The total space cost is no more than $C'n \log  16\lambda$ bits and the expected query time is $O(1)$. To take a step further, we can replace the last $O(\log n-\log\log n)$ approximate filters with one exact filter whose construction space is $O(n)$, so that we can reduce the number of filters from $O(\log n)$ to $O(\log \log n)$. We summarize the properties of the exact \algoname{} using operator ``\&'' and ``$\&\sim$'' in \textbf{Table \ref{algo::summary}} (``$/$'' means the additional space complexity is no more than the filter space complexity).  
\end{Rmk}

%查询集合假阳性
%
%\textbf{Discussion: } We discuss some properties of the new design. First, DS\algoabbr{} supports online inclusion of positive items. Second, if we allow some negative negative rate, it has low overall error (we'll evaluate its performance in \textbf{Section \ref{perf::sah}}). Second, the construction of DS\algoabbr{} needs to read the universe $\mathcal{U}$ for many times. 

	\presec
\section{Applications and Evaluation}\label{perf}
\postsec

$\noindent\bullet$ \textbf{Experimental Setup}: In this section, we implement \algoname{} and its variants in C++ and Python and equip them with Murmur Hashing \cite{MurmurHash} to compute mapped addresses. We evaluate their performance in terms of space usage, speed and accuracy in several applications, including data compressing (\textbf{Section \ref{perf::compact truth table}, \ref{perf::randomized huffman coding}}), classifying (\textbf{Section \ref{perf::sah}, \ref{Learned Filter}}), and filtering (\textbf{Section \ref{perf::lsmtree}}). Note that the universe $\mathcal{U}$ may not be the absolute universe. Instead, it can be the set of frequently queried items (\textbf{Section \ref{perf::lsmtree}}). Unless otherwise stated, all item keys are 64-bit pre-generated random integers.  All experiments are conducted on a machine with 36 Intel$^\circledR{}$ Core\texttrademark{} i9-10980XE CPU @ 3.00GHz (576KiB L1 d-cache, 576KiB L1 i-cache, 18MiB L2 cache, 24.8MiB L3 cache) and 128GB DRAM.

$\noindent\bullet$ Our evaluation \textbf{metrics (with units)} are:

(a) \textbf{Filter space (Mb)}: The size of the filter measured in million bits (Mb). Additional construction space is not included.

(b) \textbf{Average construct and query throughput (Mops)}: The average number of operations per time, measured in million operations per second (Mops). Each experiment was repeated 10 times, and the mean value was recorded to reduce error.

(c) \textbf{Error rate}: The ratio of the number of misclassified (both false positive and false negative) items to the number of all items. We use this metric in \textbf{Section \ref{perf::sah}}.

(d) \textbf{Tail latency ($\mu s$)}: The high percentile latency measured in $\mu s$. For example, a P99 latency represents the time cost of an operation which is longer than 99\% time costs of all operations. We use this metric in \textbf{Section \ref{perf::lsmtree}}.

(e) \bb{False Positive Rate}: The ratio of the false positive items to the number of negative items. We use this metric in \textbf{Section \ref{Learned Filter}}.

\subsection{Static Dictionary}\label{perf::compact truth table}
As a warmup, in this part, we use \algoname{} ("\&" version, \textbf{Algorithm \ref{algo::construct}}) to compactly encode Boolean function $ \varphi: \{0,1,...,|\mathcal{U}|\}$ $\mapsto \{0,1\}$ with static support.

\subsubsection{\textbf{Modeling}} We regard all $n$ inputs which satisfy $\varphi(\vec{X})=1$ as positive items, and the other $\lambda n$ inputs as negative items. According to the \textbf{Remark} of \textbf{Theorem \ref{algo::optimality}}, the exact \algoname{} requires
\begin{equation*}
\begin{split}
\frac{C}{1+\lambda}\left(\lfloor\log \lambda\rfloor+1+\frac{\lambda}{2^{\lfloor\log \lambda\rfloor}}\right)|\mathcal{U}|\leqslant\frac{4C}{5\log 5-8} H(\frac{1}{\lambda+1}) |\mathcal{U}|\text{ bits,}
\end{split}
\end{equation*}
where $C<1.13$ according to the \textbf{Remark} of \textbf{Section \ref{algo::warm up}}. So we have
\begin{Cor}\label{Perf::Compact Truth Table}
When $C<1.13$, \algoname{} takes at most $4C/(5\log 5-8)=26\%$ space overhead to encode static dictionaries with high probability $1-o(1)$.
\end{Cor}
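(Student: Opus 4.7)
The plan is to reduce the claim to a single-variable optimization and verify the worst case at a specific value of $\lambda$. From the Remark following Theorem \ref{algo::optimality}, the ``\&'' version of \algoname{} uses at most $Cn(\lfloor\log\lambda\rfloor + 1 + \lambda/2^{\lfloor\log\lambda\rfloor})$ bits, while Theorem \ref{algo::lower bound} gives the information-theoretic lower bound $n(\lambda+1)H(1/(\lambda+1)) = |\mathcal{U}|\,H(1/(\lambda+1))$ bits. Their ratio is $C\cdot s(\lambda)$ where
$$s(\lambda) \;:=\; \frac{\lfloor\log\lambda\rfloor + 1 + \lambda/2^{\lfloor\log\lambda\rfloor}}{(\lambda+1)\,H(1/(\lambda+1))},$$
so the corollary reduces to proving $\sup_{\lambda\geqslant 1/\ln 2}\, s(\lambda) = s(4) = 4/(5\log 5 - 8)$.

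The key technical step is this supremum computation. First I would check that $s$ is continuous across every dyadic breakpoint $\lambda = 2^k$: both the left- and right-hand forms of the numerator evaluate to $k+2$ there. On each interval $[2^k, 2^{k+1})$, the numerator $N(\lambda) = k+1+\lambda/2^k$ is affine with positive slope $2^{-k}$, while the denominator $D(\lambda) = (\lambda+1)H(1/(\lambda+1))$ is strictly concave: a short computation gives $D'(\lambda) = \log(1+1/\lambda)$ and $D''(\lambda) = -1/[\lambda(\lambda+1)\ln 2] < 0$. Defining $\phi := N'D - ND'$ (so that the zeros of $\phi$ are the critical points of $s$), one obtains $\phi' = -N\,D'' > 0$. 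Thus $\phi$ has at most one zero on each dyadic interval, and when it has one it is a local minimum of $s$. Hence the maximum of $s$ on each such interval is attained at an endpoint, and by continuity $\sup s = \sup_{k\geqslant 0} s(2^k)$.

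A direct evaluation gives $s(2^k) = (k+2)/[(2^k+1)H(1/(2^k+1))]$. Using the asymptotic $(m+1)H(1/(m+1)) = \log m + \log e + o(1)$, one obtains $s(2^k) = 1 + (2-\log e)/k + o(1/k)$, so $s(2^k)$ tends to $1$ monotonically from above for $k$ large, and only finitely many values need to be inspected. The tabulated values $s(1)=1$, $s(2)\approx 1.089$, $s(4)\approx 1.108$, $s(8)\approx 1.104$, $s(16)\approx 1.093, \ldots$ peak at $k=2$, giving $\sup s = s(4) = 4/[5H(1/5)] = 4/(-\log(1/5)-4\log(4/5)) = 4/(5\log 5 - 8)$. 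Multiplying by $C<1.13$ and using $5\log 5 - 8 \approx 3.610$ yields an overhead of at most $4\cdot 1.13/3.610 \approx 1.253$, i.e.\ within $26\%$ of the lower bound. The ``with high probability $1-o(1)$'' clause then follows from the peeling-success bound for Bloomier Filters recalled in Section \ref{algo::warm up} (under the Walzer construction \cite{walzer2021peeling}, both stages succeed w.h.p.).

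The main obstacle is justifying rigorously that the finite numerical check is enough, i.e.\ that $s(2^k)$ decreases monotonically beyond an explicit $k_0$. This is elementary but requires making the $o(1/k)$ in the asymptotic expansion of $(m+1)H(1/(m+1))$ explicit via a Taylor estimate of $H$ near $0$, so that a bound $s(2^k) \leqslant s(4)$ for all $k\geqslant k_0$ can be established with no numerical guesswork, leaving only a fixed finite list of $k$'s to check by hand.
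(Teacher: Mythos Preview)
Your approach is correct and matches the paper's: both compare the Remark of Theorem~\ref{algo::optimality} cost $Cn(\lfloor\log\lambda\rfloor+1+\lambda/2^{\lfloor\log\lambda\rfloor})$ to the lower bound $n(\lambda+1)H(1/(\lambda+1))$ and identify the worst ratio as $4/(5\log 5-8)$ at $\lambda=4$. The paper, however, simply asserts the inequality in the displayed line preceding the corollary without justification; your piecewise-convexity argument (affine numerator, concave denominator, hence interior critical points are minima and the supremum is attained at a dyadic $\lambda=2^k$) together with the tabulation and the asymptotic $s(2^k)\to 1$ actually supplies the missing proof, so your write-up is strictly more complete than what the paper offers.
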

\begin{figure}[h!tbp]
  \centering
   \includegraphics[width=0.47\textwidth]{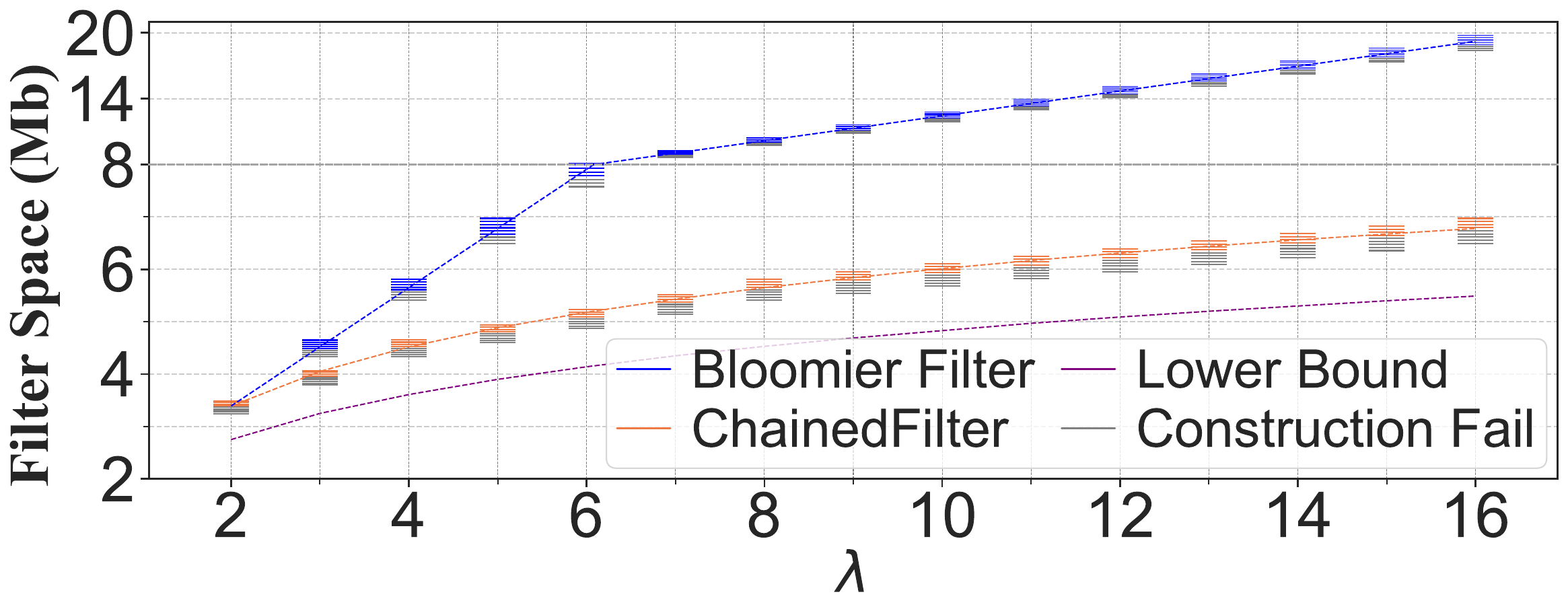}
\caption{Filter space of the exact Bloomier Filter, \algoname{}, and the theoretical lower bound. The dotted lines and the dashes represent theoretical and experimental results, respectively. We run each data point for 10 times with different hash seeds. A colored dash indicates a successful construction, while a grey dash indicates the construction fails (you may want to enlarge this figure).}
\label{pic::exp1}
\end{figure}
\subsubsection{\textbf{Experiments}} We fix $n=1$ million and vary the parameter $\lambda$ from 2 to 16 to compare the filter space, construction throughput and query throughput between \algoname{} and exact Bloomier Filter\footnote{When $\lambda=1$, \algoname{} exactly degenerates to exact Bloomier Filter.}. In \textbf{Figure \ref{pic::exp1}}, we find that the experimental space costs correspond well with our theory. Once the allocated space cost is larger than a certain threshold, the filter build processes will succeed with high probability. Specifically, when $\lambda=16$, \algoname{} costs 64\% less space than exact Bloomier Filter. In \textbf{Figure \ref{pic::exp23} (a)}, we discover that the construction throughput of exact Bloomier Filter decreases as $\lambda$ increases, whereas that of \algoname{} increases as $\lambda$ increases. This is because although a larger $\lambda$ leads to a worse locality, the throughput bottle neck of \algoname{} is from the space cost of exact Bloomier Filter, which fluctuates between $2Cn$ (when $\lambda=2,4,8,16$) and $3Cn$. Therefore, the amortized throughput of each item of \algoname{} increases, and it even surges from the case of $\lambda=2^i-1$ to the case of $\lambda=2^i (i=2,3,4,...)$. Specifically, when $\lambda =16$, the construction throughput of \algoname{} is 407\% higher than that of Exact Bloomier Filter. In \textbf{Figure \ref{pic::exp23} (b)}, we observe that the query throughput of \algoname{} shows the same phenomenon. This is because only the positive items and the false positive items require the lookup of the second stage filter. Specifically, when $\lambda =16$, the query throughput of \algoname{} is 103\% higher than that of Exact Bloomier Filter.
\begin{figure}[h!tbp]
  \centering
   \includegraphics[width=0.47\textwidth]{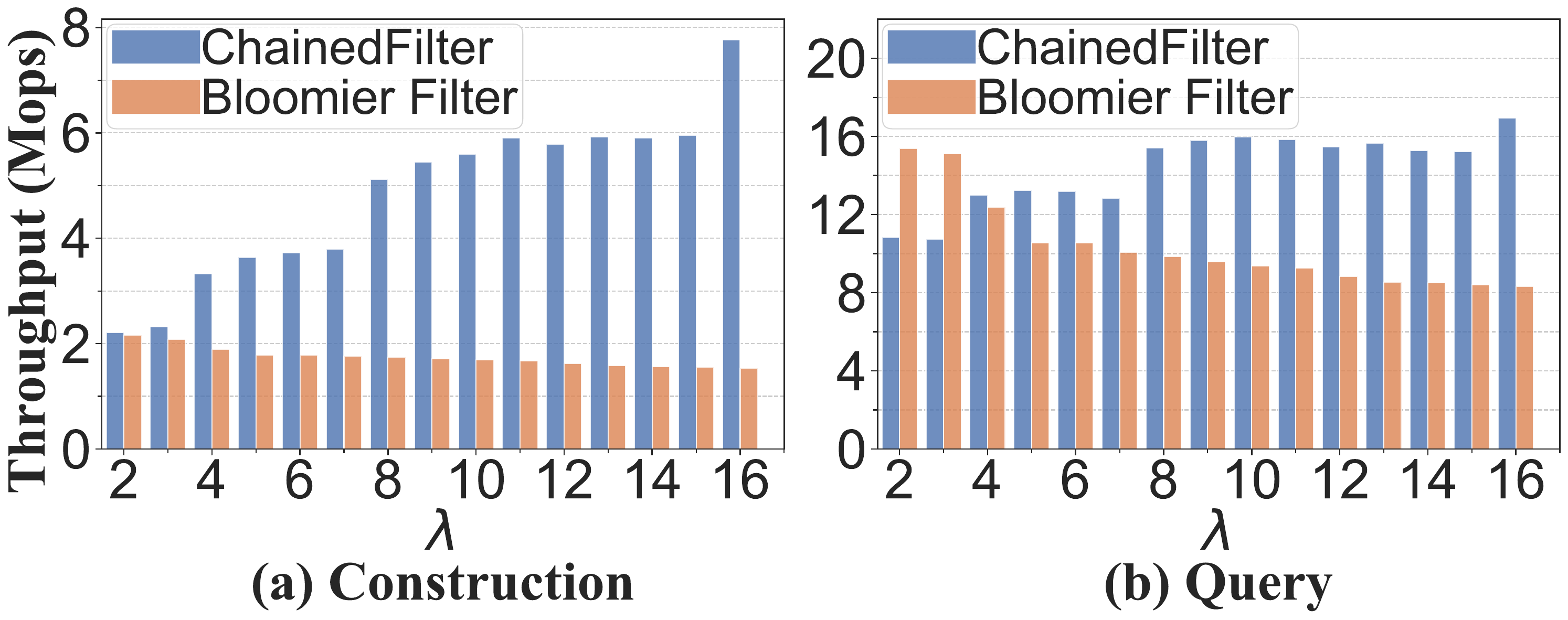}
\caption{Average construction and query throughput.}
\label{pic::exp23}
\end{figure}
\subsection{Random Access Huffman Coding}\label{perf::randomized huffman coding}
In this part, we apply \algoname{} to lossless data compression. 
\subsubsection{\textbf{Background}} In information theory, Huffman Coding \cite{huffman1952method} is an optimal prefix coding technique for lossless data compression. Given the possibility of occurrence of all kinds of symbols $\vec{p}:=(p_1,p_2,...)$, the Huffman's algorithm generates a variable-length code table, namely Huffman Tree, which assigns an $l_i-$bit prefix code for the $i$-th symbol. The literature shows that the average code length $\overline{{L}}_{\mathrm{Huff}}:=\sum l_ip_i$ satisfies 
$$H(\vec{p})\leqslant \overline{{L}}_{\mathrm{Huff}}< H(\vec{p})+1, \text{ where }H(\vec{p}):=-\sum\limits_i p_i\log p_i.$$
Although Huffman's algorithm is optimal for separate symbols, it has certain limitations: \textbf{(1) Compression ratio}. Unlike arithmetic coding or ANS \cite{duda2013asymmetric}, $\overline{{L}}_{\mathrm{Huff}}$ may not approach the entropy $H(\vec{p})$ arbitrarily. For example, if we have a string consisting of one character `a' and $1023$ character `b'. The Huffman's algorithm will cost $1024$ bits to encode them, but the least space cost is only 10 bits since we can simply record the address of `a'. \textbf{(2) Decoding order}. The Huffman's algorithm does not support random memory access of data. \textbf{(3) Confidentiality}. Attackers with prior knowledge of the character's frequency $\vec{p}$ can decipher the unencrypted Huffman Tree. \textbf{(4) Robustness}. In Huffman Code, small interference (bit flip or loss) may cause it to fail to recognize the starting position of characters, resulting in decoding failure. Some alternative approaches can partially address these problems. For example, \textbf{(1) blocking} some symbols can increase the compression ratio at the expense of the Huffman Tree's complexity; \textbf{(2) Encoding} the Huffman Code into a perfect hash table can support random access at the expense of space overhead \cite{hreinsson2009storing}; \textbf{(3) Encryption} and \textbf{(4) error correcting code} may increase the confidentiality and robustness at the expense of time and space overhead.

\subsubsection{\textbf{Modeling}} \algoname{} ("\&" version, \textbf{Algorithm \ref{algo::construct}}) can overcome the four limitations simultaneously. Given the Huffman Tree based on the probability vector $\vec{p}$, we encode every character's address paired with its Huffman Code into \algoname{}. Specifically, for the $i$-th character in the data with Huffman Code $\vec{v}=(v_1,v_2...,v_k)\in \{0,1\}^k$, we encode $(\text{key}=(i,j),\text{value}=v_j), j\in[1..k]$ into \algoname{}. To query the $i$-th character, we can query $\text{key}=(i,1),(i,2)...$ until the leaf node of the Huffman Tree. 

For example, to compress a string "\texttt{ab$\backslash$0}" with its Huffman Tree (`a'$\to $\texttt{00}, `\texttt{b}'$\to$ \texttt{01}, `$\backslash$\texttt{0}'$\to $\texttt{1}), we can encode the negative items $(1,1),$ $(1,2),$ $(2,1)$ and the positive items $(2,2), (3,1)$ into \algoname{}. To query, say, the second character, we first query key $=(2,1)$ and find that it is negative, and then query key$=(2,2)$ and find that it is positive. Therefore, we know that the second character whose Huffman Code is ``\texttt{01}'' is `\texttt{b}'. 

Our algorithm has random access property, high confidentiality (as long as the hash seed is secure), and high robustness. In \textbf{Theorem \ref{perf::Huff}}, we prove that \algoname{} optimizes the worst-case compression ratio performance as well.
\begin{theorem}\label{perf::Huff}
The average code length of our algorithm, denoted as $\overline{{L}}_{\mathrm{ours}}$, satisfies 
$H(\vec{p})<\overline{{L}}_{\mathrm{ours}}< H(\vec{p})+ 0.22$
 with high probability $1-o(1)$.
\end{theorem}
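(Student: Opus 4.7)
The plan is to sandwich $\overline{L}_{\mathrm{ours}}$ between $H(\vec{p})$ and $H(\vec{p})+0.22$ by feeding the structure of the Huffman stream into the space cost of the exact \algoname{} given in the Remark of Theorem~\ref{algo::optimality}, and then comparing the result with the classical redundancy bounds for Huffman coding. First I would fix the bookkeeping: for $N$ source characters the Huffman encoder emits $L = N\,\overline{L}_{\mathrm{Huff}}$ bits, of which $n$ are $1$s, so that the stored Boolean function has universe $|\m{U}| = L$, positive set $|\m{S}| = n$, and negative-to-positive ratio $\lambda = (L-n)/n$; write $q := 1/(\lambda+1) = n/L$ for the empirical $1$-bit frequency.

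For the upper bound I would apply the Remark of Theorem~\ref{algo::optimality} to obtain
\begin{equation*}
n\, f^{CF}(0,\lambda) \;\leq\; 1.11\, C\, (\lambda+1)\, n\, H\!\left(\tfrac{1}{\lambda+1}\right) \;=\; 1.11\, C\, L\, H(q)
\end{equation*}
bits in total, which after dividing by $N$ becomes $\overline{L}_{\mathrm{ours}} \leq 1.11\, C\, \overline{L}_{\mathrm{Huff}}\, H(q)$. The peeling-threshold analysis from Section~\ref{algo::warm up} lets me push $C \to 1$ with probability $1-o(1)$, at which point I would combine two inequalities for Huffman codes: the classical redundancy bound $\overline{L}_{\mathrm{Huff}} < H(\vec{p})+1$, and the invertibility estimate $H(\vec{p}) \leq \overline{L}_{\mathrm{Huff}}\, H(q)$. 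The latter holds because the encoded bit string is a deterministic lossless function of the source, so its joint entropy $N H(\vec{p})$ is bounded above by its i.i.d.-Bernoulli upper bound $L\,H(q)$. A joint minimization over admissible pairs $(\overline{L}_{\mathrm{Huff}}, q)$, constrained by these two inequalities and Kraft's identity, is then expected to yield the additive constant $0.22$.

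For the lower bound I would invoke Shannon's source coding theorem: since the stored Boolean function uniquely determines the source string, its representation must occupy strictly more than $H(\vec{p})$ bits per character on average, giving $\overline{L}_{\mathrm{ours}} > H(\vec{p})$.

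The hardest step will be isolating the tight additive constant $0.22$ rather than the multiplicative $1.11$ slack that naively falls out of composing the \algoname{} and Huffman bounds. I expect this to require exploiting that the worst-case \algoname{} overhead $\max_\lambda f^{CF}(0,\lambda)/f(0,\lambda)\approx 1.11$ is attained near integer $\lambda = 2^k$, whereas the worst-case Huffman redundancy is attained at highly skewed distributions that push $H(q)$ away from $1$; the two slacks therefore cannot be saturated at the same $\vec{p}$, and the analysis reduces to bounding a small explicit one-parameter function of $\lambda$ to extract the sharp constant.
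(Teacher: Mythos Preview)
Your approach is genuinely different from the paper's. The paper argues by structural induction on the Huffman tree: the base case is a two-leaf tree, where the exact expression $f^{CF}(0,\lambda)=C(\lfloor\log\lambda\rfloor+1+\lambda/2^{\lfloor\log\lambda\rfloor})$ from the Remark of Theorem~\ref{algo::optimality} is divided by $1+\lambda$ and compared pointwise to $H(1/(1+\lambda))$; the maximum gap occurs at $\lambda=2$ and equals $C+\tfrac{2}{3}-\log 3<0.22$. The inductive step then merges two already-bounded subtrees under a common root, using the identity $qH(\vec p_1)+(1-q)H(\vec p_2)=H(\vec p)-H(q)$ to recombine entropies. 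The constant $0.22$ thus comes from a one-variable calculus fact about a single binary split, not from any interaction with the Huffman redundancy bound.

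Your global route has a real gap at the final optimization, and the ``non-simultaneity of worst cases'' heuristic you propose cannot close it. Once you aggregate the whole bitstream into a single $(0,\lambda)$-problem, the factor $C>1$ inside $f^{CF}$ multiplies all of $\overline{L}_{\mathrm{Huff}}$, so the excess over $H(\vec p)$ necessarily contains a term of order $(C-1)\overline{L}_{\mathrm{Huff}}$. Concretely, for the source of $2^{m}$ equiprobable symbols with the canonical $m$-bit code you get $q=\tfrac12$, $\lambda=1$, $f^{CF}(0,1)=2C$, hence $\overline{L}_{\mathrm{ours}}=Cm$ and overhead $(C-1)m$; both of your constraints $\overline{L}_{\mathrm{Huff}}<H(\vec p)+1$ and $H(\vec p)\le\overline{L}_{\mathrm{Huff}}H(q)$ are tight for this family, the Huffman redundancy is zero, and $\lambda=1$ is not even where the ratio $f^{CF}/f$ peaks---yet the additive overhead is unbounded. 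So the two slacks you hope to decouple are not the obstruction. The paper avoids this by reasoning node-by-node rather than passing through the global bit frequency $q$; you would need to adopt some form of that tree decomposition to get an additive constant.
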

\begin{proof}
We only prove the right inequality by mathematical induction. For convenience, we define constant $C_0:=0.22$.
First, if the Huffman Tree has only two leaf nodes, according to the \textbf{Remark} of \textbf{Theorem \ref{algo::optimality}}, we have
\begin{equation*}
    \begin{split}
        \overline{L}_{\mathrm{ours}}&\leqslant \frac{C}{1+\lambda}\left(\lfloor\log \lambda\rfloor+1+\frac{\lambda}{2^{\lfloor\log \lambda\rfloor}}\right)\\
        &\leqslant H(p)+C+\frac{2}{3}-\log 3<H(p)+C_0.
    \end{split}
\end{equation*}
Consider there are two Huffman Trees that satisfy the above inequality. We combine them into a larger Huffman Tree with weights $q$ and ($1-q$), respectively. Since the number of layers is increased by one, we have
\begin{equation*}
\begin{split}
    \overline{L}&_{\mathrm{ours}}=\sum\limits_iqp_{1i}(l_{1i}+1)+\sum\limits_i(1-q)p_{2i}(l_{2i}+1)\\
    &\leqslant qH(\vec{p}_1) + (1-q)H(\vec{p}_2) + 1 + C_0\\
    &=-\sum\limits_i qp_{1i}\log qp_{1i}-\sum\limits_i (1-q)p_{2i}\log (1-q)p_{2i}\\
    &\quad+q\sum\limits_i \log q +(1-q)\sum\limits_i \log (1-q) + 1 +C_0\\
    &=H(\vec{p}) -(H(q)-1)+ C_0<H(\vec{p}) +C_0.
\end{split}
\end{equation*}
So $\overline{L}_{\mathrm{ours}}<H(\vec{p})+C_0$ holds for all Huffman Trees.
\end{proof}
\begin{Rmk} \textbf{Theorem \ref{perf::Huff}} demonstrates that our algorithm has a tighter upper bound (0.22 bit per item) on space overhead than the standard Huffman's algorithm (1 bit per item). This is particularly advantageous when the data is highly skewed or sparse.  However, one drawback of \algoname{} is its poor spacial locality, which limits its throughput in hierarchical memory systems. To compensate this drawback, we can enable the first and the second stage filters to share the same mapped addresses, as suggested in \cite{reviriego2021approximate}. The optimized version uses a $Cn\lceil\log\lambda\rceil$-bit approximate Bloomier Filter (with less than $n$ false positive items) and a $2Cn$-bit exact Bloomier Filter, and every $(\lceil\log\lambda\rceil+2)$ bits are organized as a block. With this optimization, each item has $j=3$ common mapped blocks shared by both of the first and the second stages, making it possible to access its value within $j$ memory accesses.
\end{Rmk}
\begin{figure}[h!tbp]
  \centering
   \includegraphics[width=0.48\textwidth]{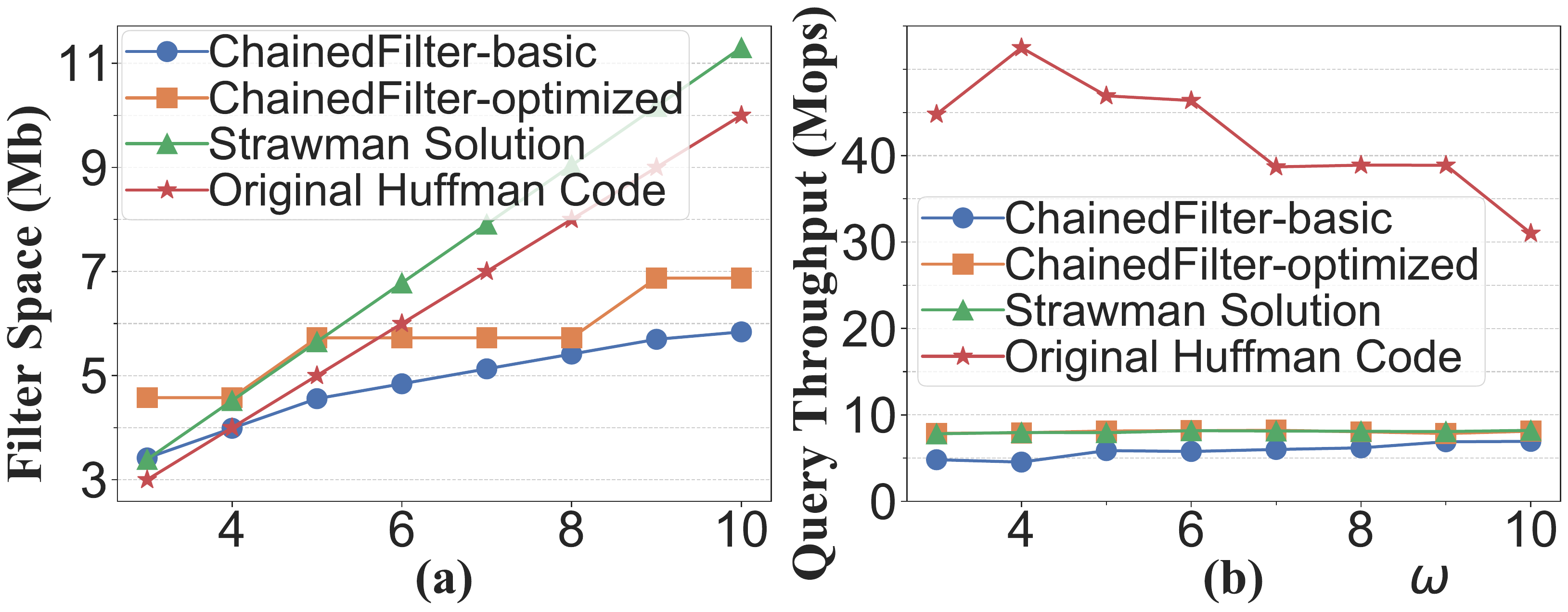}
\caption{Filter space and query throughput of random access Huffman coding algorithms.}
\label{pic::huffman}
\end{figure}
\subsubsection{\textbf{Experiments}} We synthesize eight datasets of strings whose characters' occurrence obeys an exponential distribution with parameter $\omega=3,4,...,10,$ respectively\footnote{For example, when $\omega=3$, the dataset may have (expected) 1 character `\texttt{a}', 3 character `\texttt{b}', 9 character`\texttt{c}', and 27 character `\texttt{d}', etc.}. We fix the number of positive items to 1 million, generate the Huffman Code for each string, and encoding the Huffman Code using \algoname{} and its optimized version (\textbf{Remark} of \textbf{Theorem \ref{perf::Huff}}). To provide a strawman solution, we encode the Huffman Code into an exact Bloomier Filter and record its filter space and query throughput. For reference, we also evaluate the performance of raw Huffman Coding, but present its sequential decoding throughput rather than random decoding throughput (because it does not support random access). The experimental results are shown in \textbf{Figure \ref{pic::huffman}}, we find that \algoname{} performs better as $\omega$ grows. When $\omega=10$, the basic and the optimized versions of \algoname{} saves 48.3\% and 39.2\% of the filter space compared to the strawman solution, while their query throughput is only 15.5\% and 1.17\% slower.
\subsection{Self-Adaptive Hashing}\label{perf::sah}
In this part, we use \algoname{} to reduce the number of memory accesses required for Cuckoo Hashing \cite{pagh2004cuckoo}.  
\subsubsection{\textbf{Background}} Serving as a hash predictor, \algoname{} reduces unnecessary memory accesses for multiple-choice hashing, which take around $100 ns$ for DRAM and $\approx 150 \mu s$  for NAND SSD. Instead, it requires only a few fast (around $10$ ns) in-cache lookups. 

Based on "the power of two random choices" \cite{mitzenmacher2001power}, Cuckoo Hashing consists of two hash tables $T_1[1..M]$ and $T_2[1..M]$, allowing an item $e$ to have two potential mapping locations $T_1[\mathrm{h}_1(e)]$ and $T_2[\mathrm{h}_2(e)]$. Each item can only occupy one location at a time, but can swap between the two if its prior location is taken. To insert an item $e$, we first attempt to place it into $T_1[\mathrm{h}_1(e)]$. If $T_1[\mathrm{h}_1(e)]$ is already occupied by another item $e'$, we evict $e'$ and reinsert it into $T_2[\mathrm{h}_2(e')]$. We repeat this process until all items stabilize. If this is impossible, we reconstruct the cuckoo hash table with a new hash seed. The literature shows that the insertion failure probability is $O(1/M)$ when the occupancy (load factor $r$) is less than $1/2-\varepsilon$. However, to query an item, we often check both of the two hash tables, which can result in significant latency penalties. 

A well known solution to reduce the number of memory accesses for Cuckoo Hashing is to use a pre-filter, such as a Bloom Filter \cite{bloom1970space} or Counting Bloom Filter\footnote{The Counting Bloom Filter replaces every bit to a counter to support deletion of existing items. It is called Counting Block Bloom Filter \cite{manber1994algorithm} if we further restrict all mapped bits in the same block.} \cite{fan2000summary}, to predict the mapped locations of each item. However, false predictions by the pre-filter can cause additional memory accesses. To address this problem, EMOMA \cite{pontarelli2018emoma} adds a 1(block):1(bucket) pre-Counting Block Bloom Filter that corresponds to the first hash table, and locks certain items in the second hash table to prevent movements that may cause false positives. This creates an always-exact hash predictor that supports line-rate processing (e.g. in programmable switches or FPGAs) at the expense of complex insertion process and space overhead.
\begin{figure}[h!tbp]
  \centering
   \includegraphics[width=0.48\textwidth]{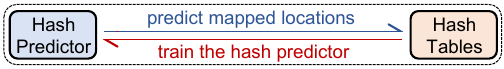}
\caption{The concept of the self-adaptive hashing.}
\label{pic::sah}
\end{figure}
\subsubsection{\textbf{Modeling}} Instead of supporting absolute exact matches, we use \algoname{} ("$\&\sim$" version, \textbf{Algorithm \ref{algo::DSBE}}) as a pre-filter to predict the mapped locations with best effort. \text{Our key idea is} to let false predictions train the predictor to reduce errors (\textbf{Figure \ref{pic::sah}}). Specifically, we regard items in the first hash table as negative and those in the second hash table as positive. To query an item $e$, we ask \algoname{} for the predicted location. If the prediction is incorrect, we adjust \algoname{} by flipping the mapped bits in $\mathcal{F}_1,\mathcal{F}_2,\mathcal{F}_3,...$ to 1 until \algoname{} can accurately predict the locations of $e$. In the \textbf{Remark} of \textbf{Theorem \ref{algo::variant}}, we have proved that the error rate will ultimately converge to zero as long as the space cost of is greater than $C'n\left(\log  \lambda -((1+\delta)/(1-\delta))\log\delta\right)     
 $. Next we show how to choose the negative-positive ratio $\lambda$. 
\begin{theorem}\label{perf::ratio}
Given the expected number of items $|\m{U}|$ and the number of buckets in the Cuckoo hash table, $2M$, we define $r:=|\m{U}|/2M<1/2-\varepsilon$ is the load factor. Then we have 
    $$\lambda=\left(\frac{2r}{1-e^{-2r}}-1\right)^{-1}+o(1).$$
\end{theorem}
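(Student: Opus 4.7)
The plan is to compute the expected occupancy of $T_1$ via a balls-and-bins calculation, obtain the occupancy of $T_2$ by conservation, and then take the ratio and simplify algebraically.

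First, I would establish the following structural invariant on the Cuckoo insertion process: for every $i\in[M]$, the slot $T_1[i]$ is occupied at the end of all insertions if and only if at least one item $e\in\m{U}$ has $h_1(e)=i$. This is conditioned on the high-probability event that every insertion succeeds, which holds with probability $1-O(1/M)$ since $r<1/2-\varepsilon$. The invariant rests on two elementary observations about the first-choice-preferring dynamics described in Section~\ref{perf::sah}: (a) only items with $h_1(e)=i$ can ever be placed into $T_1[i]$; (b) once $T_1[i]$ becomes non-empty it stays non-empty, because an eviction from $T_1[i]$ is triggered precisely by the arrival of another item that simultaneously refills the slot. Consequently $T_1[i]$ is empty in the final configuration iff no ``ball'' lands in bin $i$ in the balls-and-bins experiment where $|\m{U}|$ balls are thrown uniformly into $M$ bins by $h_1$.

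Second, a standard Poisson-approximation argument gives
\[
\mathbb{P}[T_1[i]\text{ occupied}] = 1-(1-1/M)^{|\m{U}|} = 1-e^{-2r}+o(1),
\]
so $\mathbb{E}[|\text{items in }T_1|] = M(1-e^{-2r})+o(M)$. Concentration around this mean follows from McDiarmid's inequality: resampling any single hash value $h_1(e_j)$ changes $\sum_i \mathbf{1}\{T_1[i]\text{ occupied}\}$ by at most $2$, which yields $|\text{items in }T_1| = M(1-e^{-2r})(1+o(1))$ with high probability. By conservation $|\text{items in }T_2| = |\m{U}|-|\text{items in }T_1| = M(2r-1+e^{-2r})(1+o(1))$, and identifying positives with $T_2$-items and negatives with $T_1$-items as in the modeling gives
\[
\lambda = \frac{1-e^{-2r}}{2r-1+e^{-2r}} + o(1) = \left(\frac{2r-(1-e^{-2r})}{1-e^{-2r}}\right)^{-1} + o(1) = \left(\frac{2r}{1-e^{-2r}}-1\right)^{-1} + o(1),
\]
where the last two equalities are purely algebraic manipulations.

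The main obstacle is the structural invariant in the first step: Cuckoo insertion can in principle involve arbitrarily long displacement chains that alternate between $T_1$ and $T_2$, so one must argue carefully that no such chain ever vacates a previously filled slot of $T_1$. The ``atomic refill'' observation handles this locally, but it must be combined with the standard high-probability guarantee that displacement chains terminate under the load-factor condition $r<1/2-\varepsilon$, so that the final configuration and the invariant are simultaneously well-defined. Once the invariant is secured, the remaining Poisson estimate, concentration, and algebraic simplification are routine.
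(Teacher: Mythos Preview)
Your proposal is correct and follows essentially the same approach as the paper: both rest on the observation that the number of occupied $T_1$ slots equals the number of distinct $h_1$-values among the inserted items, and then compute $\eta/M\approx 1-e^{-2r}$ (the paper via a coupon-collector harmonic sum $2r\approx\sum_{l=M-\eta}^{M}1/l\approx-\ln(1-\eta/M)$, you via the direct occupancy probability $1-(1-1/M)^{|\m{U}|}$) before the identical algebraic simplification. Your explicit atomic-refill invariant and McDiarmid concentration step make the argument more careful than the paper's proof, which leaves those points implicit.
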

\begin{proof}
    Let $\eta$ be the number of negative items (items in the first hash table). A new item can either insert into an empty location in the first hash table, increasing $\eta$ by one, or evict an old item with probability $\eta/M$ (assuming no insertion failures). This problem can be reformulated as the well-known coupon collector's problem, where $M$ is the number of coupons and $\eta$ is the number of draws. Therefore, we have
    \begin{equation*}\begin{split}
        2r=\sum\limits_{l=M-\eta}^{M}\frac{1}{l} + o&(1) =\int\limits_{1-\eta/M}^{1}\frac{1}{x}\mathrm{d}x+o(1)=\ln\left(1-\frac{\eta}{M}\right)+o(1)\\
        &\Rightarrow \lambda:=\frac{\eta}{2rM-\eta}=\left(\frac{2r}{1-e^{-2r}}-1\right)^{-1}+o(1).
    \end{split}
    \end{equation*}
\end{proof}
\begin{Rmk}
According to \textbf{Theorem \ref{perf::ratio}} and the \textbf{Remark} of \textbf{Section \ref{algo::dsbe}}, when $\delta=1/2$, \algoname{} incurs a space cost of no more than
$C'n\log 16\lambda=2C'r/(\lambda+1)\log 16\lambda\cdot M$ bits. In contrast, the Counting Block Bloom Filter of EMOMA \cite{pontarelli2018emoma} costs $8M$ bits if every block has two 4-bit counters. In the following experiments, we fix the Cuckoo hash table size $2M$ to 1 million. \textbf{Table \ref{perf:lambda}} shows the space cost (measured in Mb) of \algoname{} ($\delta=1/2$) and EMOMA when $r\in[0.1,0.4]$ (for reasonable comparison, the space cost of Bloom Filter equals to that of EMOMA). We observe that \algoname{} is much more space-efficient: it saves $76.7\%$ (r=0.4) $\sim$ $99.75\%$ (r=0.1) space of EMOMA to predict exact locations. 
\end{Rmk}
\renewcommand\arraystretch{1}
\begin{table}[h!tbp]
	\centering
	\begin{tabular}{c|c@{\hskip 7pt}c@{\hskip 7pt}c@{\hskip 7pt}c@{\hskip 7pt}c@{\hskip 7pt}c@{\hskip 7pt}c@{\hskip 7pt}c}
		\bottomrule
            \textbf{\diagbox{Space}{$r$}} & \textbf{0.10} &\textbf{0.15}&\textbf{0.20}&\textbf{0.25}&\textbf{0.30}&\textbf{0.35}&\textbf{0.40}\\
            % \hline
            % \textbf{$\lambda$} & 9.68 &6.35&4.69&3.69&3.03&2.56&2.21\\
            % \textbf{Experimental $\hat{\lambda}$} & 9.72 &6.32&4.70&3.71&3.03&2.56&2.21\\
            \hline
             \textbf{EMOMA} & 4.00 &4.00&4.00&4.00&4.00&4.00&4.00\\
            \textbf{\algoname{}} & 0.10 &0.20&0.32&0.45&0.60&0.76&0.93\\
            \toprule
	\end{tabular}
\caption{Filter space of EMOMA and \algoname{}.}
\label{perf:lambda}
\end{table}
\subsubsection{\textbf{Experiments}} %In this part, we fix the Cuckoo hash table size $2M$ to 1 million and insert $2M r$ items into the Cuckoo hash table.  \textbf{(1)} In the first experiment, we  vary the load factor $r$ from 0.10 to 0.40. \textbf{Table \ref{perf:lambda}} shows the theoretical and experimental negative-positive ratio ($\lambda$ and $\hat{\lambda}$ respectively), and the corresponding theoretical filter size of EMOMA and \algoname{} (measured in Mb)
%
%
%\footnote{For EMOMA, we empirically let every block have two 4-bit counters and only count the Counting Block Bloom Filter size. For \algoname{} we let $\delta=1/2$ and present the theoretical space upper bound (i.e. $2C'Mr(\log 16\lambda)/(\lambda+1)$) in the \textbf{Remark} of \textbf{Section \ref{algo::dsbe}}}. 
%
%
%
In the first experiment, we show that although \algoname{} may not initially predict the mapped locations absolutely exactly, its error rate decreases exponentially and can quickly converge to zero. To verify this, we set $r=0.4$ and query all items in order for $R$ rounds to train the \algoname{} and show the change in error rate in \textbf{Figure \ref{pic::errorrate} (a)}. We find that only $0.34\%$ items are wrongly predicted after three rounds of training, and all items are exactly predicted after seven rounds of training, resulting in a reduction of $(\lambda+1)^{-1}|_{r=0.4}=31\%$ external memory access for Cuckoo Hashing. To further accelerate the training process, we insert all items of the $\lfloor\log\log n\rfloor=4$-th layer $\mathcal{F}_4$ into an Othello hash table \cite{yu2018memory} according to the \textbf{Remark} of \textbf{Section \ref{algo::dsbe}}, so that the training process can be completed in four rounds (``ChainedFilter-optimized'' in \textbf{Figure \ref{pic::errorrate}}). In the second experiment, we use keys of length 36 bytes and values of length 64 bytes and compare the throughput of raw Cuckoo Hashing, Cuckoo Hashing with \algoname{}, with EMOMA, and with Bloom Filter (\textbf{Figure \ref{pic::errorrate} (b)}). We find that although \algoname{} uses only $23.3\%$ of the filter space of EMOMA, its average construction and query throughput is 17\% and 41\% faster, respectively.  It is interesting to note that both Cuckoo Hashing with EMOMA and Cuckoo Hashing with Bloom Filter exhibit slower query throughput compared to raw Cuckoo Hashing, possibly due to their higher hash computation overhead.
\begin{figure}[h!tbp]
  \centering
   \includegraphics[width=0.48\textwidth]{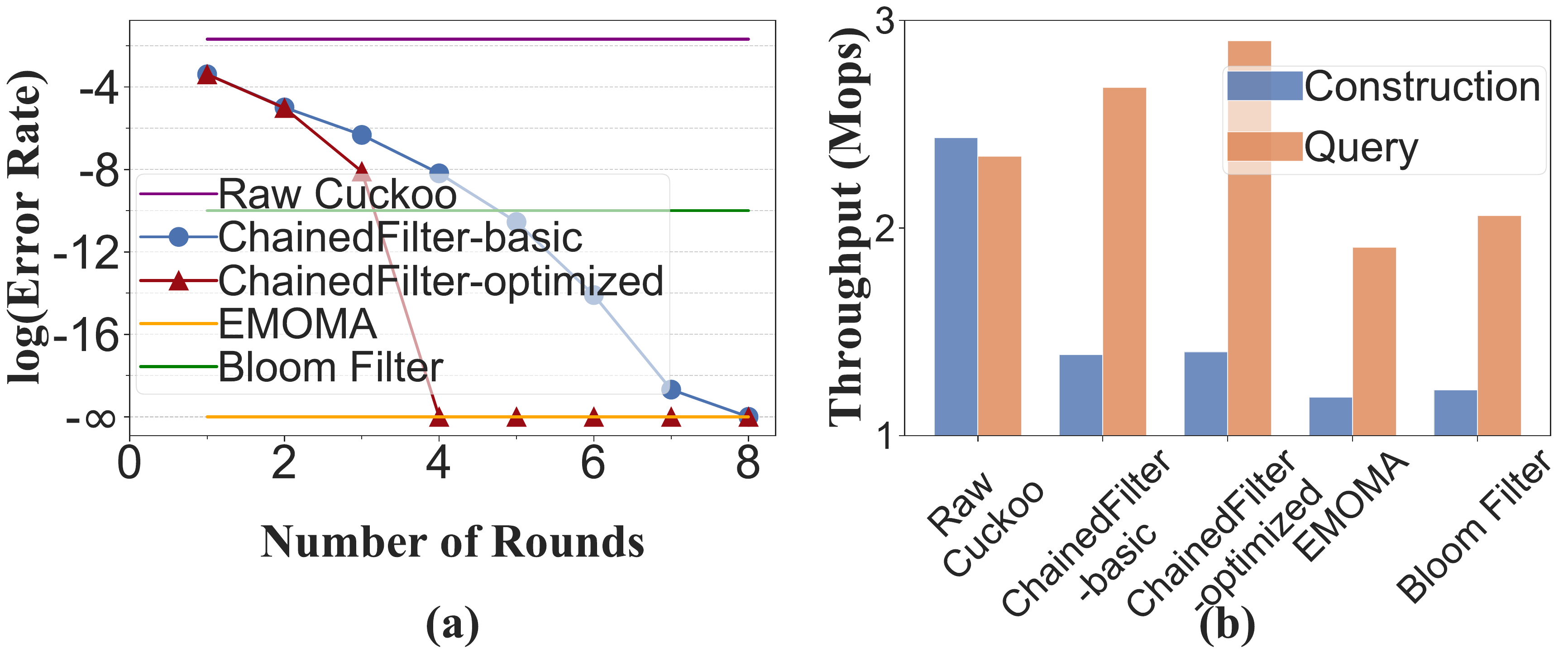}
\caption{Error rate and throughput of raw Cuckoo Hashing, Cuckoo Hashing with \algoname{} (0.93Mb), with EMOMA (4Mb), and  with Bloom Filter (4Mb).}
\label{pic::errorrate}
\end{figure}
\subsection{Point Query for LSM-Tree}\label{perf::lsmtree}
In the final application, we utilize \algoname{} to reduce the tail point query latency of LSM-Tree.
\subsubsection{\textbf{Background}}
LSM-Tree (short for log-structured merge-tree) is a storage system for high write-throughput of key-value pairs. It uses an in-memory data structure called memtable to buffer all updates until it is full, and then flushes the contents into the persistent storage as a sorted run through a process called minor compaction. However, since the sorted runs may have overlapping key ranges, LSM-Tree has to check all of them for a point query, which can result in poor query performance. To address this problem, LSM-Tree uses a hierarchical merging process called major compaction to merge sorted runs\footnote{The most commonly used compaction strategies are leveled compaction and tiered compaction. The leveled compaction minimizes space amplification at the expense of read and write amplification, while the tiered compaction minimizes write amplification at the cost of read and space amplification. Here we only focus on the tiered compaction.}. In tiered major compaction, the LSM-Tree is organized as a sequence of levels, and the number of runs in each level is bounded by a threshold $T$. If the number achieves the threshold, the compaction process is triggered to merge all $T$ sorted runs a new sorted run (SSTable) in the next level. To further speed up point queries, each SSTable typically has an approximate filter, such as a Bloom Filter, to skip most non-existing items. However, since the approximate filter has false positives, in the worst case, a point query still needs to check all SSTables, resulting in poor tail point query latency.
%
%
%For high insert volume, it buffers all updates in an in-memory data structure called MemTable. Once the  MemTable is full, the LSM-Tree sorts and flushes it into the persistent storage as a sorted run. This process is called minor compaction. Since the sorted runs may have overlapping key ranges, the LSM-Tree has to check all of them for point query. To reduce the query time, LSM-tree sorts and merges the runs in a hierarchical manner, namely major compaction
%
%
\begin{figure}[h!tbp]
\centering
\includegraphics[width=0.48\textwidth]{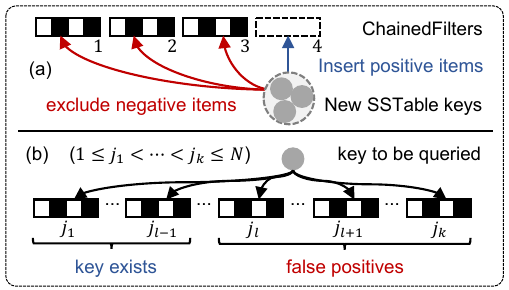}
\caption{\algoname{} (``\& version'') for LSM-Tree. Note that the second stage part must be a dynamic filter.}
\label{pic::rocksdb}
\end{figure}
\subsubsection{\textbf{Modeling}} To reduce the latency, we replace every approximate filter with a dynamic exact \algoname{} (``\&'' version, \textbf{Algorithm \ref{algo::construct}} and \textbf{Section \ref{algo::replace elementary filters}}). Suppose there are $N$ SSTables in one level, our goal is to reduce the worst case additional SSTable search time from $N$ to 1 (whether or not there are repeating items). Our key idea is, for \algoname{} of the $i$-th SSTable in this level, we regard all keys in the $i$-th SSTable as positive items, and regard all other keys in the $(i+1),(i+2),...,N$-th SSTables (but not in the $i$-th SSTable) as negative items (\textbf{Figure \ref{pic::rocksdb} (a)}). In this way, an exact \algoname{} says ``yes'' only when (1) the queried key is in its corresponding SSTable, (2) the queried key is not in the subsequent $(i+1),(i+2),...,N$-th SSTables in the same level (otherwise the key must be excluded when the later SSTables are formed). Consider the case where the $j_1<j_2<...<j_k$-th \algoname{}s report ``yes''. Our strategy is to check the corresponding SSTables in order until we find a false positive SSTable (note that the item keys are repeatable). Once we detect that the $j_l$-th \algoname{}'s result is a false positive, we can assert that all the $j_{l+1},j_{l+2},...,j_{k}$-th \algoname{}'s results are false positives as well (\textbf{Figure \ref{pic::rocksdb} (b)}). Therefore, the number of additional SSTable searches in this level is no more than one.
\subsubsection{\textbf{Experiments}} We measure the tail point query latency of our approach in RocksDB database \cite{dong2021rocksdb} by replacing the built-in Bloom Filter with the \algoname{}. To simplify the implementation, we let all SSTables stay in the first level (\texttt{L0}) and use the default 64MB write buffer size\footnote{RocksDB stores temporal write operations in memtable and flushes it to disk to generate an SSTable file in \texttt{L0} when its size reaches the write buffer size.}. We reuse the built-in Bloom Filter as the first stage of \algoname{}, but additionally implement an Othello hash table as the dynamic second stage filter. When creating a new SSTable after a compaction, we generate the Bloom Filter as what RocksDB does, but additionally query its keys in prior SSTables in the same level and update the false positive SSTables' \algoname{}s by including the false positive items into the second stage filter (\textbf{Figure \ref{pic::rocksdb} (a)}). When querying an item, we check the positive SSTables (whose Bloom Filter reports ``yes'') in order. Once we find a false positive one, we can assert that all later possible SSTables are false positives (\textbf{Figure \ref{pic::rocksdb} (b)}).

\begin{figure}[h!tbp]
  \centering
   \includegraphics[width=0.47\textwidth]{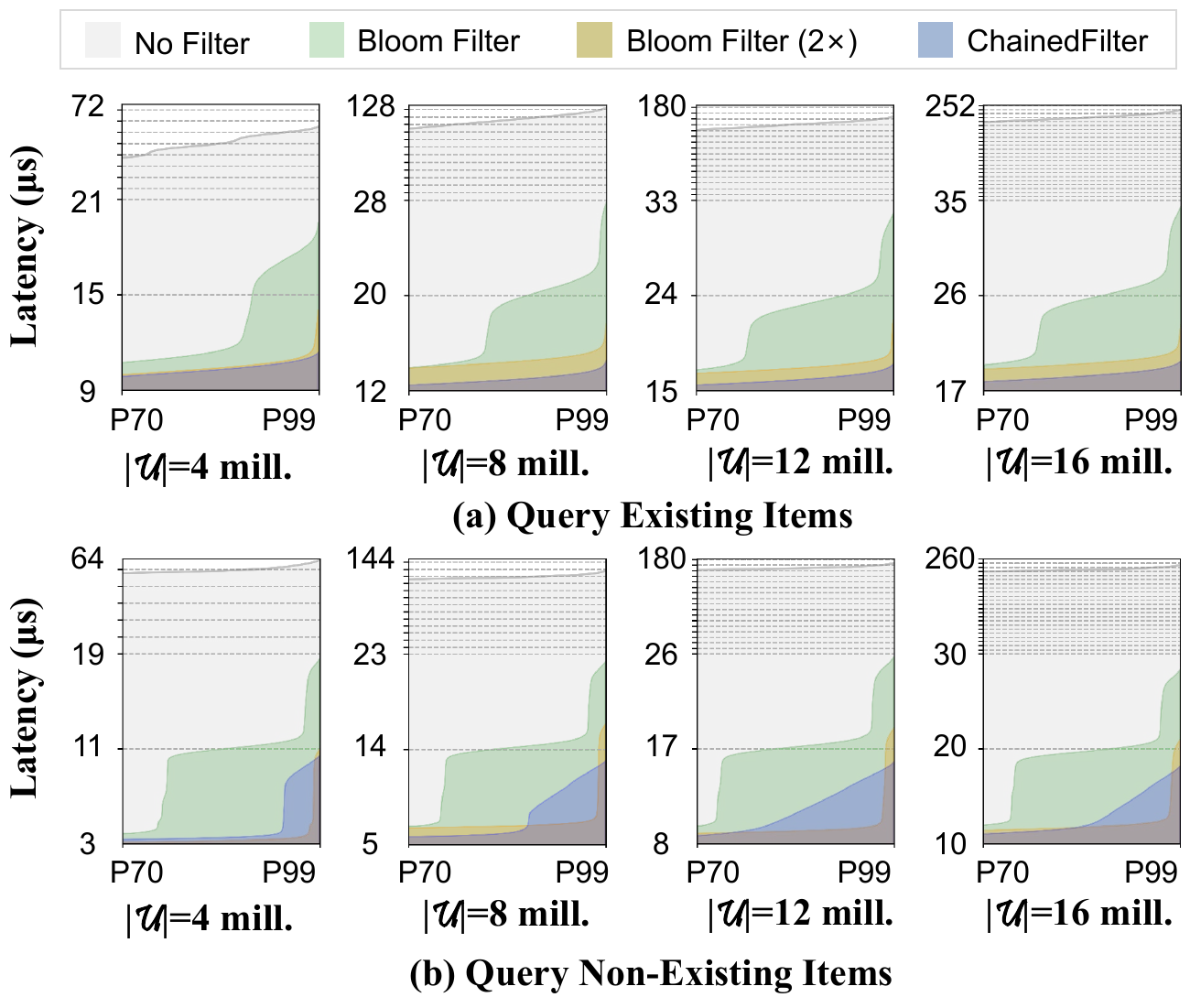}
\caption{Tail point query latency in RocksDB.}
\label{pic::lsm}
\end{figure}

\textbf{Figure \ref{pic::lsm}} demonstrates that \algoname{} can significantly reduce the tail point query latency of LSM-Tree. We generate items with distinct 36-Byte random keys and 64-Byte values and vary the number of items $|\mathcal{U}|$ from 4 million (400MB data, 7 SSTables in total) to 16 million (1.6GB data, 30 SSTables in total). The filter space of \algoname{} just allows it to exactly classify positive and negative items. We use RocksDB with only Bloom Filters with the $0\times, 1\times, \text{ and }2\times$ filter space as the comparison algorithm (``$0\times$'' means no filter). In \textbf{Figure \ref{pic::lsm} (a)}, we show the tail latency of querying existing items\footnote{Since all existing keys are different, we end the query process once we find an item.}. Let us consider the ``$|\mathcal{U}|$\textbf{=16 mill.}'' sub-figure as an example. The green area in the figure shows the tail point query latency using only built-in Bloom Filter which costs the same space as \algoname{}. The \texttt{P0-P77}, \texttt{P77-P95}, and \texttt{P95-P99} tail latency approximately represents the query latency with zero, one, and more than one false positive SSTable reads. In the worst case, with many false positive SSTable reads, the \texttt{P99} tail point query latency is about $31\mu s$. In contrast, as shown in the blue area,  the query process using \algoname{} has no false positive when querying existing items, so the tail latency remains under $20\mu s$, which is 36\% lower. In \textbf{Figure \ref{pic::lsm} (b)}, we show the tail latency of querying non-existing items. We find that when $|\mathcal{U}|$ is 16 million, the latency with \algoname{} gradually increases. This is because the latency is not only determined by the number of false positive SSTables, but is also influenced by the index of the false positive \algoname{}. For example, if the first SSTable reports a false positive, the algorithm can quickly return, and thus the overall latency may be even lower than $12\mu s$. However, if the 30-th SSTable is the first one to report false positive, the overall latency will be more than $18\mu s$.

\subsection{Learned Filter}\label{Learned Filter}
In this part, we use \algoname{} to reduce the false positive rate of Learned Filters.
\subsubsection{\bb{Background}}  Although our space lower bound (\bb{Section \ref{Space Lower Bound}}) assumes that the positive items are randomly  drawn from the universe, further compression can be achieved when the items follow a specific data distribution. Learned Filters \cite{kraska2018case,mitzenmacher2018model,liu2020stable,dai2020adaptive} incorporate a continuous function, such as an RNN model, in front of the filter structure to capture the data distribution. When querying an item, if the continuous function outputs ``yes'', the Learned Filter immediately reports true. However, when the continuous function outputs ``no'', the item is sent to a backup Bloom Filter to eliminate false negatives. In the learned filter structure, both the learning model and the backup filter may introduce false positives, but in some cases, the overall false positive rate even decreases.

\subsubsection{\bb{Experiments}} Extending the chain rule (\bb{Theorem \ref{thm::chain rule}}) to general membership problems with different data distributions is an intriguing problem\footnote{An analogy is, some machine learning algorithms like Boosting \cite{freund1999short} combine multi-stage elementary classifiers to work together, which seem similar to \algoname{}.}. Since we haven't derive an elegant theory at this moment, we empirically replace the backup Bloom Filter / Bloomier Filter with our \algoname{} (``\&'' version) to show the experimental improvements. We refer to the open-source code of the learned Bloom Filter and the dataset on GitHub \cite{LearnedFilter}. The dataset has 30,000 positive (good) and 30,000 negative (bad) websites evaluated by users. In this experiment, we randomly select different proportions (ranging from $0\%$ to $100\%$) of data to train the RNN model. Generally, the model's generalization ability improves as the amount of training data increases. Experimental results show that when we fix the overall false positive rate at 0.01\footnote{For Learned \algoname{} and Learned Bloomier Filter, we set the false positive rate of the RNN model to 0.01, and the false positive rate of the backup filter to zero. For Learned Bloom Filer, we set the false positive rate of both the RNN model and the backup filter to 0.005.}, the filter space of Learned \algoname{} can be up to 99.1\% lower than that of Learned Bloom Filter and 96.2\% lower than $nf(\epsilon,\lambda)$ (the space lower bound without considering data distribution).

\begin{figure}[h!tbp]
  \centering
  \includegraphics[width=0.48\textwidth]{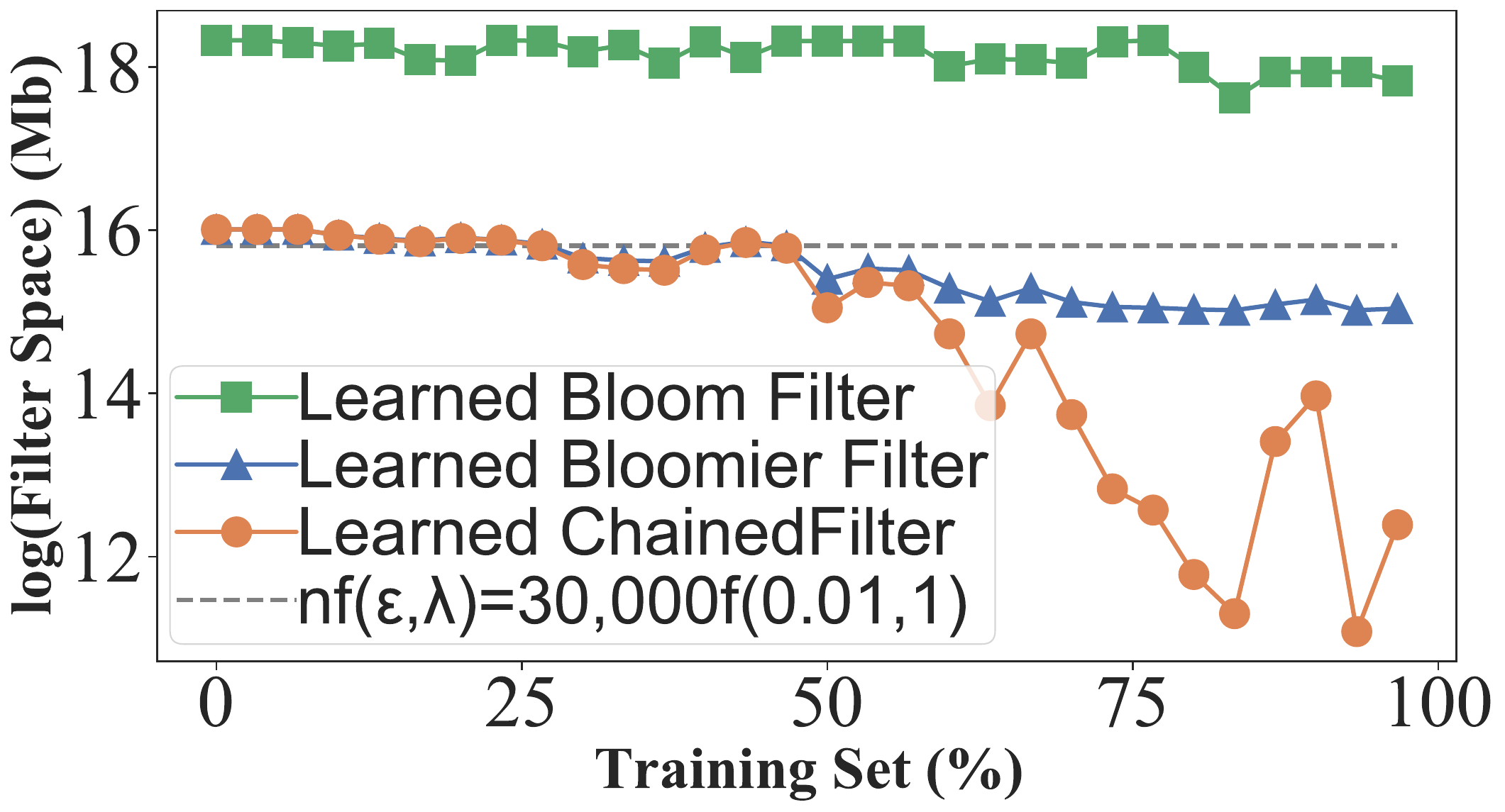}
\caption{Filter space in bits ($\log$ scale) of Learned Bloom Filter, Learned Bloomier Filter and Learned \algoname{}.}
\label{pic::errorrate}
\end{figure}

% \begin{figure}[h!tbp]
%   \centering
%    \includegraphics[width=0.23\textwidth]{}
% \caption{TEST2}
% \label{}
% \end{figure}

        % \vspace{-0.12in}

\presec
\section{Related Work}\label{sec:related}
Filter algorithms are the foundation of many important problems \cite{wu2021elastic,9944968,stingysketch,li2023ladderfilter,guo2023sketchpolymer,Rencoder,namkung2022sketchlib,li2023steadysketch,fan2022evolutionary,namkung2023sketchovsky,liu2023hypercalm} and have been well studied. In this section, we introduce more membership filters that may work as our elementary filters. 
\subsection{Approximate Membership Filters}
\postsec
Approximate filters originated with Bloom Filter \cite{bloom1970space} in 1970. A Bloom Filter is a bitmap of size $m$ where every one of $n$ positive items is mapped using $k$ independent hash functions. To insert an item, we set all mapped positions to one; To query an item, we check whether all mapped positions are one, and report positive iff they are. A Bloom Filter satisfies one-sided error: a query is either “definitely not” (no false negative) or “probably yes” (small false positive). The false positive rate depends on the space cost $$\epsilon=\left(1-\left(1-\frac{1}{m}\right)^{nk}\right)^k\approx\left(1-e^{-\frac{nk}{m}}\right)^k\Rightarrow m\geqslant \frac{n\log1/\epsilon}{\ln2}.$$ 
In 1978, \cite{carter1978exact} gave the tight space lower bound ($\log 1/\epsilon+o(1)$ bits per item) for approximate filters, which means the Bloom Filter wastes no more than $1/\ln2-1=44\%$ space. In 2010, \cite{lovett2013space} gave the space lower bound $C(\epsilon) \log 1/\epsilon$ bits per item for dynamic approximate data structures, where $C(\epsilon)>1$ depends only on $\epsilon$. Many later works have emerged to reduce the space overhead of Bloom Filter. Cuckoo Filter  (2014) \cite{fan2014cuckoo} borrows the concept from Cuckoo Hashing (\textbf{Background} of \textbf{Section \ref{perf::sah}}) and uses fingerprints for approximate classification. It maps an item $e$ to two buckets $h(e)$ and $(h(e)\oplus f(e))$ in one hash table, and shows that the load factor can be up to 95\% if the buckets have four slots. Therefore, the space cost drops to $1.05(2+\log1/\epsilon)$ bits per item. Inspired by Bloomier filter and the peeling theory, XOR Filter  (2019) \cite{graf2020xor} and Binary Fuse Filter  (2022) \cite{graf2022binary} achieves a space cost of $C\log1/\epsilon$ bits per item for static membership query (\textbf{Remark} of \textbf{Section \ref{algo::warm up}}). In fact, a more compact space cost ($(1+o(1))\log1/\epsilon$ bits per item) is given by \cite{dietzfelbinger2008succinct} (2008) and \cite{porat2009optimal} (2009) for static membership query if we allow a more complex implementation.

\subsection{Exact Membership Filters}

One type of exact filters builds upon perfect hashing \cite{majewski1996family}. The Bloomier Filter \cite{chazelle2004bloomier,charles2008bloomier} (2004) is a milestone which requires $1.23$ bits per item (the theoretical result is excerpted from IBLT (2011) \cite{goodrich2011invertible}). According to \cite{walzer2021peeling} (2021), this constant can be optimized to close to 1 (\textbf{Remark} of \textbf{Section \ref{algo::warm up}}). In fact, a more compact space cost of $(1+o(1))$ bits per item is achievable if a more complex implementation is allowed \cite{dietzfelbinger2008succinct,porat2009optimal}. Othello Hashing \cite{yu2018memory} (2016) and Coloring Embedder \cite{tong2019coloring} (2021) provide dynamic perfect hashing designs with $O(1)$ bits per item. However, they both need additional structures with $\Omega(n\log n)$ space for construction and update. 

Another one type of exact filters is called ``dictionaries''. In 1984, \cite{fredman1984storing} described a general constant-time hashing scheme (FKS dictionary) with a space complexity of $O(n)$ words (not bits). In 1994, researchers designed an static exact filter with a space complexity of $O(B)$ \cite{brodnik1994membership} (they later improved to $B(1+O(1/\log\log\log|\mathcal{U}|))$ \cite{brodnik1999membership} in 1999) for static dictionary, where $B:=\lceil\log\binom{|\mathcal{U}|}{n}\rceil$ is the lower bound for static membership problems \cite{carter1978exact}. In 2001, the space complexity is optimized to nearly optimal ($B+O(\log\log (|\mathcal{U}|))+o(n)$ bits) \cite{pagh2001low}. In 2010, \cite{arbitman2010backyard}
 proofs the lower bound $(1+o(1))B$ also holds for dynamic exact membership filters. Nowadays, the research on exact membership problems is still on going \cite{bercea2020space}.

In the end, we list some other membership filters satisfying special properties. In 2005, \cite{pagh2005optimal} started considers dynamic general membership problems and designed a membership filter with a space complexity of $((1 + o(1))n\log 1/\epsilon +
O(n+\log|\mathcal{U}|)$ bits for arbitrary $\epsilon$ and $\lambda$. Since 2018, researchers have creatively introduced machine learning methodology to construct Learned Bloom Filters \cite{kraska2018case,mitzenmacher2018model,liu2020stable,dai2020adaptive}, or add pre-filters to construct membership algorithms \cite{reviriego2021approximate}, which are more accurate than the original Bloom Filter.

	% \vspace{-0.12in}

\presec
\section{Conclusion}
\label{sec:conclusion}
\postsec

In this paper, we present a lossless factorization theorem, namely chain rule, for solving general membership query problems. Based on this theorem, we propose a simple yet space-efficient filter framework called \algoname{} and apply it to various applications. Both theoretical and experimental results show that the \algoname{} outperforms its elementary filters. We believe our chain rule can inspire more innovative works, and our \algoname{} can be used in more practical applications.

\section*{Acknowledgement}
We thank to Maria Apostolaki (Princeton University) and Jiarui Guo (Peking University) for valuable comments of this work.
    % \clearpage
 	% \vfill
 	% \eject
 	{
 	  %  \scriptsize
 	    \bibliographystyle{unsrt}
 	    \balance
 \bibliography{InputFiles/reference}
 	}
  % \clearpage
  %   \input{tex/appendix}

\end{document}